\newtheorem{claim-subsection}{Claim}[subsection]
\newtheorem{definition}{Definition}
\newtheorem{lemma}{Lemma}
\newtheorem{theorem}{Theorem}
\newcommand{\remove}[1]{}
\newcommand{\oh}[1]{{\color{red} #1}\normalcolor}
\newcommand{\removelatexerror}{\let\@latex@error\@gobble}
\newcommand{\writefunc}{\mbox{\sc Write}}
\newcommand{\readfunc}{\mbox{\sc Read}}
\newcommand{\writerecover}{\mbox{\sc Write.Recover}}
\newcommand{\readrecover}{\mbox{\sc Read.Recover}}
\newcommand{\casfunc}{\mbox{\sc Cas}}
\newcommand{\casrecover}{\mbox{\sc Cas.Recover}}
\newcommand{\Op}{\mbox{\sc Op}}
\newcommand{\init}{\mbox{$\bot$}}
\newcommand{\CAS}{\mbox{\textit{CAS}}}
\newcommand{\True}{\mbox{\textbf{true}}}
\newcommand{\False}{\mbox{\textbf{false}}}
\newcommand{\Ack}{\mbox{\textbf{ack}}}
\newcommand{\fail}{\mbox{\textbf{fail}}}
\title{Upper and Lower Bounds on the Space Complexity \\
	of Detectable Objects}
\begin{document}

	\thispagestyle{empty}
	\begin{titlepage}
		
		\author{
			Ohad Ben-Baruch\\
			Ben-Gurion University, Israel\\
			ohadben@post.bgu.ac.il
			\and
			Danny Hendler\\
			Ben-Gurion University, Israel\\
			hendlerd@cs.bgu.ac.il
			\and
			Matan Rusanovsky\\
			Ben-Gurion University and\\ Israel Atomic Energy Commission, Israel\\
			matanru@post.bgu.ac.il
		}
		
		\maketitle
	
	%%
	%% The "title" command has an optional parameter,
	%% allowing the author to define a "short title" to be used in page headers.
	%\title{Upper and Lower Bounds on the Space Complexity \\
	%	of Detectable Objects}
	%\maketitle
	
	%%
	%% The "author" command and its associated commands are used to define
	%% the authors and their affiliations.
	%% Of note is the shared affiliation of the first two authors, and the
	%% "authornote" and "authornotemark" commands
	%% used to denote shared contribution to the research.
	%\author{Anonymous Author(s)}

\begin{abstract}
The emergence of systems with non-volatile main memory (NVM) increases the interest in the design of \emph{recoverable concurrent objects} that are robust to crash-failures, since their operations are able to recover from such failures by using state retained in NVM. Of particular interest are recoverable algorithms that, in addition to ensuring object consistency, also provide \emph{detectability}, a correctness condition requiring that the recovery code can infer if the failed operation was linearized or not and, in the former case, obtain its response.

In this work, we investigate the space complexity of detectable algorithms and the external support they require. We make the following three contributions. First, we present the first wait-free bounded-space detectable read/write and CAS object implementations. Second, we prove that the bit complexity of every $N$-process obstruction-free detectable CAS implementation, assuming values from a domain of size at least $N$, is $\Omega(N)$. Finally, we prove that the following holds for obstruction-free detectable implementations of a large class of objects: their recoverable operations must be provided with \emph{auxiliary state} -- state that is not required by the non-recoverable counterpart implementation -- whose value must be provided from outside the operation, either by the system or by the caller of the operation. In contrast, this external support is, in general, not required if the recoverable algorithm is not detectable.

%the following holds for any detectable implementation of objects from a large class that includes read/write, CAS and FIFO queue objects: either some \emph{auxiliary state} is provided to the recoverable operation via non-volatile memory, or such state must be provided to it via its arguments.
\end{abstract}

\end{titlepage}

\section{Introduction}
\label{section:introduction}

Byte-addressable non-volatile main memory (NVM)
combines the performance benefits of conventional (volatile) main memory
with the durability of secondary storage.
Systems where non-volatile memory co-exists with volatile main memory already exist and are expected to become more prevalent in the future. This increases the interest in the \emph{crash-recovery} model, where a failed process may be resurrected by the system following a crash. Traditional log-based recovery techniques can be applied correctly in such systems but fail to take full advantage of the parallelism and efficiency that may be gained by allowing processing cores to concurrently access recovery data directly from NVM, rather than by performing slow block transfers from secondary storage. Consequently, there is increasing interest in the design of \emph{recoverable concurrent objects} that are robust
to crash-failures, since their operations are able to recover from such failures by using state retained in NVM (see e.g. \cite{AttiyaBH18,FriedmanHMP18,GolabH17recoverable,GolabH18recoverable,GolabR16recoverable,JayantiJJ18,JayantiJ17}).

Of particular interest are recoverable algorithms that, in addition to ensuring object consistency, also provide \emph{detectability} \cite{FriedmanHMP18}. Detectability requires that the code recovering from a failed operation can infer if it was linearized or not and, in the former case, obtain its response.
%Detectable implementations allow the operation's caller to proceed correctly after recovering, so that higher-level recoverable objects may be composed of base recoverable objects.
Several recent works presented detectable algorithms \cite{AttiyaBH18,Ben-DavidBFW19,FriedmanHMP18}. In particular, both Ben-David et al.~\cite{Ben-DavidBFW19} and Attiya et al. \cite{AttiyaBH18} presented detectable CAS algorithms and~\cite{AttiyaBH18} also presented a detectable read/write object. All these algorithms augment the arguments of recoverable operations with unique identifiers, for allowing the recovery code to detect whether or not the failed operation was linearized, consequently incurring unbounded space complexity.\footnote{This is also the case with the durable queue algorithm of \cite{FriedmanHMP18}.} In addition, \cite{AttiyaBH18} proved that every lock-free detectable test-and-set implementation from (non-recoverable) test-and-set objects must use unbounded space. This raises the question of whether unbounded space complexity is inherent to nonblocking detectable implementations of these objects. \emph{We provide a negative answer to this question by presenting the first nonblocking bounded-space detectable CAS and read/write algorithms}. Both algorithms are wait-free. Our $N$-process bounded-space CAS algorithm uses $\Theta(N)$ bits in addition to those storing the CAS object's value. In our second contribution, \emph{we show that every obstruction-free detectable CAS implementation, assuming values from a domain of size at least $N$, must have $\Omega(2^N)$ different reachable shared-memory configurations}, thus establishing that our CAS algorithm's space complexity is asymptotically optimal.

Detectable algorithms often require \emph{auxiliary state} that helps them infer where in the execution the failure occurred. Informally, auxiliary state is information that is provided to the recoverable operation that is not provided to (nor required by) the ``original'' (non-recoverable) operation. In some works, it is assumed that this information is provided by the system.
%Some of the detectable algorithms presented by \cite{attiya2018nesting}, as well as those presented by this work, assume a model in which the system provides to the recovery code \emph{auxiliary state} that helps it infer at what point in the execution the failure occurred.
%can be obtained by an explicit aid from the system, in the form of providing %data which the system changes in-between two invocations of the recoverable %operation.
For example, the recoverable mutual exclusion algorithms presented by Golab and Hendler \cite{GolabH18recoverable} assume a model in which the system provides to each operation an \emph{epoch number} whose value increases after each (system-wide) failure. Some detectable algorithms presented by Attiya et al. \cite{AttiyaBH18} assume that the system provides to the recovery code information identifying the instruction that the failed operation was about to execute via a non-volatile variable.
However, auxiliary state is not necessarily provided by the system.
%An alternative way of providing auxiliary state to recoverable operations is do so via their arguments.
For example, the read/write algorithm of \cite{AttiyaBH18}, the CAS algorithm of Ben-David et al. \cite{Ben-DavidBFW19} and the queue algorithm of Friedman et al. \cite{FriedmanHMP18} rely on auxiliary state (e.g. unique identifiers) passed to recoverable operations via their arguments by the operations that invoke them.

%\oh{As we describe in Section \ref{section: Model}, our model makes a weaker assumption -- that the process maintains a non-volatile private variable used by it for managing check-points in its execution flow.}

We show that, for a large class of objects that includes read/write, CAS and FIFO queue objects, any obstruction-free detectable implementation must receive auxiliary state. As we prove, this auxiliary state must be made available to recoverable operations either via their arguments or via a non-volatile variable accessible by them whose value must be modified outside the operation. In contrast, this external support is, in general, not required if the recoverable algorithm is not detectable.

The rest of the paper is organized as follows. We describe the system model in Section \ref{section: Model}. We then present our bounded-space detectable read/write and CAS algorithms in Sections \ref{section: read-write} and \ref{section: CAS}, respectively. In Section \ref{section: CAS} we also prove a lower bound on the space complexity of detectable CAS. This is followed by a proof that detectable implementations of a large class of objects require auxiliary state in Section \ref{section: auxiliary-state}. The paper is concluded by a short discussion in Section \ref{section: discussion}.

\section{System Model}
\label{section: Model}

%The crash-recovery system model used by this work is based on that of ~\cite{attiya2018nesting}.
A set $P$ of $N$ asynchronous crash-prone \emph{processes}
communicate through \emph{shared objects}.
The system provides \emph{base objects} (also called shared variables or registers) that support atomic read, write,
and read-modify-write \emph{primitive operations}.
Base objects are used to \emph{implement} higher-level concurrent
objects by defining algorithms, for each process, which use primitive operations to carry
out the operations of the implemented object.

The state of the system consists of \emph{non-volatile shared-memory variables} and per-process \emph{local variables} stored in its local \emph{volatile} cache. Local variables are accessed only by the process to which they belong. For presentation simplicity, we assume that each process $p$ may own \emph{non-volatile private variables} that reside in the NVM but are accessed only by $p$.
We also assume the abstract \emph{private cache model}
\cite{Ben-DavidBFW19,IzraelevitzMS16}, in which all shared variables are always persistent and there is no shared cache. In this model, primitive operations to shared variables are applied directly to the NVM. At any point during the execution of an operation,
a \emph{system-wide crash-failure}
(or simply a \emph{crash}) may occur, which resets the local variables of all processes to their initial values, but preserves the values of all non-volatile variables.
%\footnote{We note that all our results hold also if processes may fail independently of each other.}

As we explain in Section \ref{section: discussion}, all our results hold also in the more realistic \emph{shared-cache model}. In this model, in addition to per-process private caches, there is a single (volatile) shared cache. Primitive operations to shared variables are applied to this cache and explicit persistency instructions may be required for guaranteeing that values written to this cache get persisted to the NVM in the correct order \cite{IzraelevitzMS16}.

To start executing an operation $Op$, a process $p$ \emph{invokes} $Op$. We say that \emph{Op} \emph{completes} once control returns to the caller of $Op$. Before completing, $Op$ returns a \emph{response value}, which is stored to a local variable of $p$.
The response value is lost if $p$ crashes before \emph{persisting} it
(i.e., writing it to a non-volatile variable). We say that a process is \emph{idle} if it is not in the midst of executing any operation. Each recoverable operation $Op$ of a shared object
%detectable implementation 
is associated with a \emph{recovery function}, denoted $Op.\texttt{Recover}$, which is responsible to infer whether $Op$ was linearized or not, and to obtain its response in the former case. $Op.\texttt{Recover}$ is performed by $p$ in order to recover from a failure that occurred while $p$ was executing $Op$. We assume that $Op.\texttt{Recover}$ is being called with the same arguments as those with
which $Op$ was invoked when the crash occurred. If $Op.\texttt{Recover}$ infers that $Op$ was not yet linearized, it returns a special \fail\ value, otherwise it returns $Op$'s response.

\emph{Our lower bounds (Theorems \ref{theorem:cas-lower-bound} and \ref{theorem:auxiliary-data-required}) only require the model assumptions specified above.} However, as we prove in Theorem \ref{theorem:auxiliary-data-required}, detectable algorithms must receive auxiliary state whose value is modified either by the operation's caller or by the system. We therefore make the following additional assumptions that are used by the algorithms we present. Each process $p$ is associated with a private non-volatile structure $Ann_p$ consisting of three fields. $Ann_p.op$ stores the type of recoverable operation currently performed by $p$, as well as the arguments with which it was called. It is accessed only by the caller of the recoverable operation $Op$, which sets its value (thus announcing the operation it is about to perform) immediately before invoking $Op$.
%When $p$ starts executing a recoverable operation $Op$, $Op$'s type and arguments are stored in
%(as well as additional data, if required)
%in a designated non-volatile structure $Ann_p$.
Which function (if any) should be invoked by $p$ in order to recover from a failure is determined according to the value of $Ann_p.op$. Field $Ann_p.resp$ stores the response of the recoverable operation and is initialized to $\init$ immediately before $Op$ is invoked. The 3rd field, $Ann_p.CP$, may be used by recoverable operations and recovery functions for managing checkpoints in their execution flow. Field $Ann_p.CP$ is set to 0 by the caller of the recoverable operation immediately before invoking it. $Ann_p.CP$ can be read and written by recoverable operations and their recovery functions and
is used by $p$ in order to record (in the NVM) the fact that the execution reached a certain point. The recovery function can then use this information in order to correctly recover and to avoid re-execution of critical instructions.
%In addition, $Ann_p$ has a field denoted by $result$. This field is used to persist the response value before the operation returns and is set to value $\init$ by the initialization code.

Failed processes recover in an asynchronous manner, independently of each other. Specifically, the recovery of some processes may have already completed while other processes may have not yet completed (or even started) their recovery. $Op.\texttt{Recover}$ may be invoked multiple times
before it completes, because the system may undergo multiple crashes in the course of executing it.
%a recoverable operation $Op$, as well as during the execution of $Op.\texttt{Recover}$,
%so $Op.\texttt{Recover}$ may be invoked multiple times
%before it completes.
%An operation $Op$ can be completed either directly or when,
%following one or more crashes, the execution of the last instance of
%$Op.\texttt{Recover}$ invoked for $p$ is completed.
If all the operations of an implementation are recoverable,
then the implementation is called {\em recoverable}.

%Our algorithms are \textit{strictly recoverable}~\cite[Definition 1]{attiya2018nesting},
%meaning that the response of a recoverable operation $Op$
%is made persistent before $Op$ completes;
%this enables a higher-level operation that invokes $Op$ to access $Op$'s
%response value, even if a crash occurs after $Op$ completes.

Linearizability \cite{HerlihyW90} requires that each operation applied to a concurrent object takes effect instantaneously at some point between its invocation and response. The correctness condition ensured by our algorithms is \emph{durable linearizability} (DL) \cite{IzraelevitzMS16}. DL requires that linearizability be maintained in spite of crash-failures. In other words, once the system recovers after a crash-failure, the state of the data structure reflects a history containing all operations that completed before the crash and may also contain some operations that have not completed before the crash. This  captures the idea that an operation can be linearized only once its effect gets persisted to NVM.

%\emph{Durable linearizability} \cite{DBLP:conf/wdag/IzraelevitzMS16} requires that the state of a data structure following a crash reflects a consistent subhistory of the operations that actually occurred. This
%subhistory includes all operations that completed before the crash, and may include some operations that have not completed when the crash occurred. These latter operations may be completed after the system covers.
%Roughly speaking, an operation is \emph{durable} at time t, if a recovery execution that follows a crash at this point must cause this operation to be linearized, and \emph{not durable} if a recovery after a crash at this point does not make the operation take effect.

The progress conditions we consider are \emph{wait-freedom} \cite{Herlihy91} and \emph{obstruction-freedom} \cite{HerlihyLM03}. A recoverable operation or a recovery function is wait-free (resp. obstruction-free) if, starting from any reachable configuration, $p$ completes it in a finite number of its own steps (resp. when running solo), when the system experiences no crashes.
We emphasize that all our results hold also in a model where processes may fail independently, such as that assumed by \cite{AttiyaBH18}.

\section{Detectable Read/Write Object}
\label{section: read-write}
%\oh{
%In this section, we present a detectable read/write object algorithm. Due to lack of space, a simple detectable read/write algorithm which uses registers of unbounded size is presented in Appendix \ref{section: appendix-correctness-proofs}.}
Algorithm \ref{bounded-space-recoverable-register} presents the pseudo-code of a detectable read/write object $O$ that uses bounded space from (bounded-space) variables that support read/write primitive operations.
%, whose initial value is $v_{init}$. 
To the best of our knowledge, this is the first detectable read/write algorithm that uses bounded space.
The checkpoint field $Ann_p.CP$ is used by process $p$ in order to allow the recovery function to infer where in the recoverable operation the failure occurred.
Each process $p$ owns two private variables: $RD_p$, storing data used during recovery, and $T_p$, storing an index $\in \{0,1\}$ to one of two size-$n$ \emph{toggle-bit arrays}, $A[\ ][p][0]$ and $A[\ ][p][1]$, that are used by $p$'s write operations in an alternating manner. $O$'s state consists of a single shared read/write register $R$ storing a triplet of values $\langle v,q,b \rangle$, where $v$ is $O$'s current value, $q$ is the identifier of the process that (last) wrote $v$, and $b$ is the index of the toggle-bit array used by $q$ for that write operation. Initially, $R=\langle v_{init},0,0 \rangle$, where $v_{init}$ is $O$'s initial value, thus ``attributing'' this value to a write by process $0$ that used toggle-bit array $0$. Register $R$ stores $O(\log n)$ bits in addition to the application value $v$, in contrast with the unbounded state required by the read/write object implementation of Attiya et al. \cite{AttiyaBH18}. A 3-dimensional array $A[N][N][2]$ allows each writing process $p$ to coordinate with any other process $q$ using $p$'s two toggle-bit arrays.

The key challenge with which Algorithm \ref{bounded-space-recoverable-register} copes is the ABA problem. Attiya et al. \cite{AttiyaBH18} avoid it by ensuring that all written values are distinct, at the cost of using a register of unbounded-size. Algorithm \ref{bounded-space-recoverable-register} allows the same value to be written multiple times, so a process $p$ may read from $R$ a value $v_q$ (written by process $q$) and then write some value $v_p$ that is later overwritten by another write of $v_q$ by $q$. In this case, if $p$ recovers after a system crash, a mechanism for allowing it to detect whether or not its operation was linearized is required. As we explain below, per-process toggle bits are used to implement this mechanism. Before invoking an operation on the object, its caller initializes the $Ann_p$ structure as described in Section \ref{section: Model}. Specifically, $Ann_p.CP$ is initialized to $0$ and $Ann_p.resp$ is initialized to $\init$.
%shown in line \ref{write-enter}.

\paragraph{The \writefunc\ operation} To write, process $p$ reads $R$ (line \ref{write-enter}), thus learning that $q$ was the last to write to $R$ and which toggle-bit array was used by $q$ for writing. Next, $p$ resets the bit from $q$'s \emph{other} toggle-bit array corresponding to $p$ (line \ref{write-set-0}), and persists the value read from $R$, as well as the index of the toggle-bit array used by $p$'s current write (stored in $T_p$), into $RD_p$ (lines \ref{write-read-Tp}-\ref{write-RD-update}). Then, $p$ reads $R$ again (line \ref{write-read-R-again}) and proceeds to write to $R$ (line \ref{write-R-update}) only if it read from $R$ the same value as in line \ref{write-enter}. In this case, $p$ sets its checkpoint field to $1$ (line \ref{write-CP-first}) immediately before the write to $R$ and sets it to $2$ (line \ref{write-CP-termination}) immediately after it. It then sets all the bits in the toggle-bit array used by its current write operation, switches its toggle-bit array index, persist the response and returns (lines \ref{write-for}-\ref{write-return}).

If the condition of line \ref{write-read-R-again} is not satisfied then, as we prove, a write operation $W$ by a process other than $p$ is linearized between $p$'s first and second reads of $R$, hence $p$ can be assumed to have been overwritten by $W$. In this case, $p$ skips lines \ref{write-CP-first}-\ref{write-R-update} and proceeds directly to line \ref{write-CP-termination}. 

\paragraph{The \writerecover\ recovery function}

Upon recovery from a failed \writefunc\ operation $W$, $p$ first reads $RD_p$ (line \ref{write-recover-read-RD}) and then checks if $Ann_p.result$ was set (line \ref{write-recover-if-result-set}). In this case, $W$ was completed and has been linearized, so the recovery function returns \Ack. Next, $p$ checks if $Ann_p.CP$ equals $0$ (line \ref{write-recover-if-no-CP}). In this case, as we prove, $W$ was not linearized before the failure, so the recovery function returns \fail\ (line \ref{write-reinvoke-if-CP-0}); the caller of the failed operation can now decide whether or not to reattempt performing $W$. Otherwise, if $Ann_p.CP$ equals $1$ (line \ref{write-recover-if-after-first-CP}), then the recovery code must determine whether or not $R$ was written in line \ref{write-R-update} (either by $p$ or by another process) since when $W$ read $\langle qval, q,qtoggle \rangle$ from $R$ in line \ref{write-enter}. This is done in line \ref{write-recover-if-no-write} as follows. If $R$'s value differs from $\langle qval, q,qtoggle \rangle$, then $R$ was written and so either $W$ performed line \ref{write-R-update} or $W$ can be assumed to have been overwritten by another write, so the recovery code proceeds by performing lines \ref{write-recover-set-CP}-\ref{write-recover-return} (which are identical to lines \ref{write-CP-termination}-\ref{write-return}). Otherwise, $R$'s value equals $\langle qval, q,qtoggle \rangle$ but it is still possible that $q$ wrote $\langle qval, q,qtoggle \rangle$ to $R$ again after $R$ was read by $W$. This is checked by the 2nd condition of line \ref{write-recover-if-no-write} which relies
%As we prove, if the condition of line \ref{write-read-R-again} is not %satisfied, then a write operation $W$ by a process other than $p$ is %linearized between $p$'s first and second reads of $R$, hence $p$ can be %linearized immediately before it and
%Notice that the bit set to 0 is the negation of the toggle bit $p$ reads.
on the following key observation used by our correctness proof: in order for $q$ to write again using the same toggle-bit index, it must first complete a write operation using the other toggle-bit index. However, in that earlier write operation, $q$ sets all its toggle bits of that set to 1 (either in lines \ref{write-for}-\ref{write-for-end} of its write operation or in lines \ref{write-recover-for}-\ref{write-recover-set-toggle-bits} of its recovery function). Therefore, upon recovery, if $p$ reads the same value from $R$ as before the crash, it can conclude that a write occurred in between its two reads of $R$ if and only if $q$'s toggle bit that it has set to 0 is now 1. If this is not the case, $p$ concludes that $W$ was not linearized and returns \fail\ (line \ref{write-recover-no-write-fail}).

The \readfunc\ operation reads a triplet of values from $R$ and then extracts its first component, writes it to $Ann_p.resp$ and returns it. Its recovery function re-invokes \readfunc\ if $Ann_p.resp=\init$ holds, otherwise it returns it. This simple code is not presented in Algorithm \ref{bounded-space-recoverable-register}.

It is easily seen that Algorithm \ref{bounded-space-recoverable-register} uses bounded space, assuming that the values written by \writefunc\ operations are of bounded size. It remains to show that the algorithm satisfies durable linearizability, detectability and wait-freedom.

\begin{lemma}
Algorithm \ref{bounded-space-recoverable-register} is wait-free and satisfies durable linearizability and detectability.
\end{lemma}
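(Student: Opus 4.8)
The plan is to prove the three properties in turn, with durable linearizability — and the toggle-bit (ABA) reasoning underlying it — carrying essentially all of the weight; detectability will then drop out of the same case analysis. Wait-freedom is the easy part: \writefunc, \writerecover, \readfunc, and \readrecover contain no retry structure (the test in line~\ref{write-read-R-again} only decides whether the write to $R$ is performed, it is never re-attempted) and their only loops iterate over the $n$ entries of a toggle-bit array, so every operation and every recovery function terminates within a number of steps bounded in terms of $n$ from any reachable configuration, provided no crash occurs. For linearizability I would assign linearization points as follows: a \readfunc\ (including one re-executed inside \readrecover) is linearized at its read of $R$; a \writefunc\ $W$ by $p$ that executes line~\ref{write-R-update} is linearized there; a \writefunc\ $W$ that completes (returns \Ack, or has its response recovered) without ever executing line~\ref{write-R-update} is linearized immediately before the first write to $R$ that occurs after $W$'s read in line~\ref{write-enter}; and a \writefunc\ whose recovery returns \fail\ is not linearized.

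The first routine step is to check that these choices yield a legal sequential history of a read/write object. Since $R$ is a single atomic register, reads return the last value written to $R$, and the only delicate point is the placement of an ``overwritten'' write $W$ immediately before the first subsequent write $W'$ to $R$: one must argue that such a $W'$ exists in this case (because $W$ reaching its completion code without executing line~\ref{write-R-update} implies $R$ changed since $W$'s first read), that this point lies inside $W$'s execution interval, and — using that there is no write to $R$ strictly between $W$'s read in line~\ref{write-enter} and $W'$ — that no read of $R$ is ordered between $W$ and $W'$ by its own linearization point. The substantive content of the proof, by contrast, is an invariant justifying the decision of \writerecover\ when $Ann_p.CP=1$. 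I would state it as: if $W$ is a \writefunc\ by $p$ that read $\langle qval,q,qtoggle\rangle$ from $R$ in line~\ref{write-enter} and then reset $A[p][q][1-qtoggle]$ in line~\ref{write-set-0}, and $p$ has not since taken a step of any later \writefunc, then in every reachable configuration (i)~if $R\ne\langle qval,q,qtoggle\rangle$ then some write to $R$ has been linearized after $W$'s read in line~\ref{write-enter}, and (ii)~if $R=\langle qval,q,qtoggle\rangle$ then $A[p][q][1-qtoggle]=1$ if and only if some write to $R$ has been linearized after that read.

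Proving this invariant is the step I expect to be the main obstacle. The forward direction of~(ii) uses the alternation observation already noted in the text: to bring $R$ back to $\langle qval,q,qtoggle\rangle$, process $q$ must perform a write with toggle index $qtoggle$, which (since $q$'s toggle index alternates) forces an intervening \emph{completed} write by $q$ with index $1-qtoggle$; any write by $q$ that reaches its toggle-setting loop must first have written to $R$ (via lines~\ref{write-CP-first}--\ref{write-R-update} or their recovery analogues), and that loop sets $A[p][q][1-qtoggle]$ to $1$; and between $W$'s line~\ref{write-set-0} and the current configuration $p$ itself never writes $A[p][q][1-qtoggle]$ — it has not begun a later \writefunc, and since $Ann_p.CP=1$ the operation $W$ never reached its own toggle loop — so only $q$ can have set that bit. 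The reverse direction is the same argument read backwards, together with the fact that a write by $q$ with index $1-qtoggle$ puts a value $\langle\cdot,q,1-qtoggle\rangle\ne\langle qval,q,qtoggle\rangle$ into $R$. The difficulty is to organize the case analysis so that it correctly handles: the possibility that the intervening write to $R$ is $W$'s own line~\ref{write-R-update}; concurrent writes by $q$ that have written to $R$ but not yet reached their toggle loop (these fall under case~(i), since then $R\ne\langle qval,q,qtoggle\rangle$); and writes and recovery executions of $q$ interleaved arbitrarily, including across several crashes.

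With the invariant in hand, verifying durable linearizability and detectability is mechanical. For \writerecover: if $Ann_p.resp\ne\init$ the operation completed and is linearized, so \Ack\ is correct; if $Ann_p.CP=0$ then $W$ never executed line~\ref{write-R-update} and its only shared step (line~\ref{write-set-0}) did not touch $R$, so $W$ is not linearized and \fail\ is correct; if $Ann_p.CP\ge 2$ then $W$ either executed line~\ref{write-R-update} or was overwritten, so it is linearized and the completion code returns the correct \Ack; and if $Ann_p.CP=1$, parts~(i)--(ii) of the invariant show that the joint test on $R$ and $A[p][q][1-qtoggle]$ separates exactly ``$W$ linearized'' (return \Ack\ after completing) from ``$W$ not linearized'' (return \fail). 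For \readfunc, its recovery returns the persisted response when available and otherwise re-executes a read that is linearized afresh; since reads have no effect on $R$ this is consistent. Thus after any sequence of crashes the persistent state of $R$ reflects precisely the linearized writes and all completed operations are linearized — durable linearizability — and in every case the recovery function correctly reports whether the operation was linearized and returns its response when it was — detectability. I would also remark that \writerecover\ and \readrecover\ re-derive all the data they need from the persistent $RD_p$ and $Ann_p$ and perform only idempotent updates, so correctness is preserved even when a recovery function is itself interrupted and re-invoked.
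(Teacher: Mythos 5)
Your overall architecture (wait-freedom, the choice of linearization points, the case split on $Ann_p.CP$) matches the paper's, but the invariant you place at the center of the argument is false, and it is false in precisely the direction you need for the $Ann_p.CP=1$ case. Your part~(ii) claims that when $R=\langle qval,q,qtoggle\rangle$, the bit $A[p][q][1-qtoggle]$ equals $1$ \emph{if and only if} some write to $R$ was linearized after $W$'s read in line~\ref{write-enter}. The ``if'' direction fails: the intervening completed write of $q$ with toggle index $1-qtoggle$, whose for-loop is supposed to leave the bit at $1$, may run \emph{entirely between} $p$'s read in line~\ref{write-enter} and $p$'s reset in line~\ref{write-set-0}. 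Concretely: $p$ reads $\langle qval,q,qtoggle\rangle$; then $q$ completes a full write with toggle $1-qtoggle$ (including the loop of lines~\ref{write-for}--\ref{write-for-end}, which sets $A[p][q][1-qtoggle]:=1$) and then a full write with toggle $qtoggle$ that restores the value $\langle qval,q,qtoggle\rangle$; only now does $p$ execute line~\ref{write-set-0} and erase the $1$. If $p$ then passes the test of line~\ref{write-read-R-again}, sets $Ann_p.CP:=1$ and crashes before line~\ref{write-R-update}, the resulting configuration has $R$ unchanged, the bit equal to $0$, and two writes linearized after $p$'s read --- contradicting your biconditional. Your sketch of the ``forward direction'' overlooks exactly this point: showing that only $q$ can set the bit does not show that $q$ set it \emph{after} $p$ reset it.

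The algorithm is still correct in this scenario (recovery returns \fail, and since $p$ never wrote to $R$ that is permissible), which shows that your invariant also demands more than correctness requires. What must be proved is asymmetric, and this is how the paper structures it: (a)~if the test of line~\ref{write-recover-if-no-write} evaluates to false, then some write to $R$ occurred after $p$'s read in line~\ref{write-enter} (so returning \Ack\ is justified) --- this is your~(i) together with the true direction of your~(ii); and (b)~if $p$ executed line~\ref{write-R-update}, then the test evaluates to false (so \fail\ is never returned for an operation whose write is already visible). Claim~(b) is not an instance of your biconditional applied to ``some write''; its proof hinges on the fact that $p$'s own write to $R$ happens \emph{after} $p$'s second read of $R$ in line~\ref{write-read-R-again}, which in turn is after line~\ref{write-set-0}: any subsequent restoration of $\langle qval,q,qtoggle\rangle$ then forces a completed $(1-qtoggle)$-write by $q$ whose for-loop necessarily runs after $p$'s reset, so the bit really is $1$ at recovery. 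You should replace your invariant by these two one-directional claims; as stated, the ``main obstacle'' you flagged cannot be overcome because the statement is not true.
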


\begin{proof}
Consider an execution $\alpha$ of Algorithm \ref{bounded-space-recoverable-register}. Assume process $p$ completes  a \writefunc ($val$) operation $W$ in $\alpha$ (either directly or by completing the recovery function). We prove that one of the following holds: 1) $p$ writes to $R$ exactly once, and this is $W$'s linearization point; 2) $p$ does not write to $R$ and there is a concurrent write operation $W'$ by a different process that writes to $R$, hence we can linearize $W$ immediately before $W'$; or 3) the failure occurred before $W$ wrote to $R$, in which case \writerecover\ returns \fail.

\begin{figure*}[!t]
	\removelatexerror
	
	\begin{algorithm}[H]
		
		\footnotesize
		\begin{flushleft}
			\textbf{Non-Volatile Shared variables}: read/write register $R$ initially $\langle v_{init},0,0 \rangle$, boolean $A[N][N][2]$ initially all $0$ \\
			\textbf{Non-Volatile Private variables}: Read/write register $RD_p$ initially $\bot$, $T_p$ initially 0\\
%\nl \textbf{Caller initialization}: $Ann_p := \langle Op(args), 0, \init %\rangle$ \label{write-enter}

		\end{flushleft}
		
		\begin{multicols*}{2}
			
			\begin{procedure}[H]
				\caption{() \small \writefunc\ (val)}
				
%				$Ann_p := \langle \writefunc, val, 0, \init \rangle$ \;
				$\langle qval, q,qtoggle \rangle := R$ \label{write-enter} \;
				$A[p][q][1-qtoggle] := 0$ \label{write-set-0} \;
                $mtoggle := T_p$ \label{write-read-Tp} \;
				$RD_p := \langle mtoggle,qval, q,qtoggle \rangle$ \label{write-RD-update} \;
				\lIf {$R \neq \langle qval, q,qtoggle \rangle$ \label{write-read-R-again}} {
					\textbf{goto} \ref{write-CP-termination} \label{write-goto}
				}
				$Ann_p.CP := 1$ \label{write-CP-first} \;
				$R := \langle val, p, mtoggle \rangle$ \label{write-R-update} \;
				$Ann_p.CP := 2$ \label{write-CP-termination} \;
				\For {$i=1$ to $N$ \label{write-for}} {
					$A[i][p][mtoggle] := 1$
				}\label{write-for-end}
                $T_p := 1-mtoggle$\;
                $Ann_p.result := \Ack$ \;
				\KwRet \Ack \label{write-return}
			\end{procedure}
			
			\remove{
			\begin{procedure}[H]
				\caption{() \small  \readfunc\ ()}
				
				$\langle val,p,toggle \rangle := R$ \label{BSreg-read:read-R}\;
				\KwRet val \label{BSreg-read:return-val}
			\end{procedure}
			}
			
			\columnbreak
			
			\begin{procedure}[H]
				\caption{() \small \writerecover\ (val)}
				
				$\langle mtoggle, qval, q,qtoggle \rangle := RD_p$ \label{write-recover-read-RD}\;
				\uIf {$Ann_p.result \neq \init$ \label{write-recover-if-result-set}} {
					\KwRet \Ack
				}
				\uIf {$Ann_p.CP = 0$ \label{write-recover-if-no-CP}} {
					\KwRet \fail \label{write-reinvoke-if-CP-0}
				}
				\uIf {$Ann_p.CP = 1$ \label{write-recover-if-after-first-CP}} {
					\uIf {$R = \langle qval, q,qtoggle \rangle$ and $A[p][q][1-qtoggle] = 0$ \label{write-recover-if-no-write}} {
						\KwRet \fail \label{write-recover-no-write-fail}
					}
				}
				$Ann_p.CP := 2$ \label{write-recover-set-CP} \;
				\For {$i=1$ to $N$ \label{write-recover-for}} {
					$A[i][p][mtoggle] := 1$ \label{write-recover-set-toggle-bits}\;
				} \label{write-recover-for-end}
                $T_p := 1-mtoggle$\;
                $Ann_p.result := \Ack$ \;
				\KwRet \Ack \label{write-recover-return}
			\end{procedure}
			
			\remove{
			\begin{procedure}[H]
				\caption{() \small  \readrecover\ ()}
				
				Re-invoke \readfunc
			\end{procedure}
			}
			
		\end{multicols*}
		
		\caption{Bounded space detectable read/write object $O$, code for process $p$.}
		\label{bounded-space-recoverable-register}
	\end{algorithm}
	
\end{figure*}

We start by observing that if there is no crash during $W$, then either it writes to $R$ exactly once in line \ref{write-R-update}, or the condition in line \ref{write-goto} holds, in which case $p$ does not write to $R$ and there was a concurrent write operation $W'$ by a different process that wrote to $R$.

We proceed to prove that the lemma holds also if $W$ and its recovery code experience one or more crashes. If the system crashes before $p$ writes to $\text{CP}_p$ in line \ref{write-CP-first} %or \ref{write-CP-termination},
%(in case the if statement in line \ref{write-goto} holds),
then $p$ did not write to $R$ while executing $W$, nor did it write to any entry in its toggle-bit arrays $A[i][p][y]$ for $i \neq p$, hence none of its writes is ever read by another process. Thus, $W$ was not linearized yet and the recovery code simply returns \fail\ (line \ref{write-reinvoke-if-CP-0}).
Otherwise, consider the case where a crash occurred after $p$ executed line \ref{write-CP-termination}. Then either $p$  wrote to $R$ in line \ref{write-R-update} (hence $W$ was linearized) or $p$ observed in line \ref{write-goto} a write by a concurrent write operation $W'$ and so $W$ can be linearized immediately before $W'$. In both cases, \writerecover\ re-executes the for loop of lines \ref{write-recover-for}-\ref{write-recover-for-end}, switches its toggle-bit index, and returns.

We are left with the case where a crash occurred after the update of $\text{CP}_p$ in line \ref{write-CP-first} but before its update in line \ref{write-CP-termination}. Upon recovery, in order to satisfy detectability, the algorithm has to determine whether $W$ was linearized or not.
%the crash occurred before or after $W$'s write to $R$ in line \ref{write-R-update}. 
To complete the proof, we establish the following two claims: 1) the condition of line \ref{write-recover-if-no-write} evaluates to false only if there was a write to $R$ (either by $p$ or by some other process) after $p$ first read it (in line \ref{write-enter}); 2) if $p$ wrote to $R$ before the crash (in line \ref{write-R-update}), then the condition of line \ref{write-recover-if-no-write} evaluates to false.

In line \ref{write-enter}, $p$ reads $R$ and writes its content (together with $p$'s current value of $T_p$) to $RD_p$ in line \ref{write-RD-update}. Therefore, $p$ persists the identifer $q$ of the process that last wrote to $R$ and the toggle-bit index used by $q$. Denote by $W_q$ this write of $q$, and let \emph{qts} denote the toggle-bit index used in $W_q$. Assume that the condition of line \ref{write-recover-if-no-write} evaluates to false. We prove that some write operation is linearized after $W_q$. Since a write operation can only be linearized when $R$ is written (not necessarily by the linearized operation itself), this would prove claim 1). If $p$ reads from $R$ in line \ref{write-recover-if-no-write} a value different from the one stored in $RD_p$, then clearly there was a write to $R$ after the read in line \ref{write-enter} and we are done. Otherwise, $p$ reads the same value from $R$ in line \ref{write-recover-if-no-write}, and so it holds that $A[p][q][1-qts] = 1$. In line \ref{write-set-0}, $p$ sets $A[p][q][1-qts]$ to 0. Thus, $q$ must have set it to 1 later in the execution. Notice that $q$ sets a bit of the ($1-qts$) toggle-bit array to 1 only during the for loop (in lines \ref{write-for}-\ref{write-for-end} or \ref{write-recover-for}-\ref{write-recover-set-toggle-bits}), after completing a write using the toggle-bit index $1-qts$. Denote this write by $\overline{W}_q$. Since $p$ observed the write $W_q$, which is associated with the toggle-bit index $qts$, it must be that write $\overline{W}_q$ was performed after the read of $p$ in line \ref{write-enter}. This is true since $\overline{W}_q$ must either write to $R$, or observe a write to $R$ by another process, and in both cases, the linearization point of $\overline{W}_q$ must be after $p$ observed $W_q$ in line \ref{write-enter}.

We now prove claim 2). Assume the system crashed after $p$ wrote to $R$ in line \ref{write-R-update}. Upon recovery, if $p$ reads in line \ref{write-recover-if-no-write} a value from $R$ other than that stored in $RD_p$, the claim clearly holds. Otherwise, $p$ reads the same value written by $W_q$. Notice that $p$ reads the same value in line \ref{write-goto}, after setting $A[p][q][1-qts]$ to 0. Moreover, later in the execution $p$ writes to $R$, and thus there must be another write by $q$ to $R$ with the same value as in $W_q$ that was done after $p$ wrote to $R$. In particular, these two writes use the same toggle-bit index \emph{qts}, thus there must be another write by $q$ using toggle-bit index $1-qts$ in between the two. This write must have been completed, and thus the for loop updating $A[p][q][1-qts]$ to 1 was done after $p$'s read in line \ref{write-goto}, hence, after $p$ has set $A[p][q][1-qts]$ to 0. Moreover, no process but $p$ can set $A[p][q][1-qts]$ to 0 again. Consequently, the second condition of the if statement in line \ref{write-recover-if-no-write} does not hold, implying that claim 2) holds.

To conclude the proof of durable linearizability and detectability, we claim that every completed \readfunc\ operation returns the value of the latest \writefunc\ operation linearized before it, or $v_{init}$ if there are no such \writefunc\ operations. We linearize a \readfunc\ operation $Op$ when it writes to $Ann_p.resp$ and its linearization point is then set to its previous read of $R$. It follows that the response of a completed \readfunc\ is the value of $R$'s first component when last read by the operation. A \writefunc\ operation $W$ can be linearized either when it writes to $R$ (line \ref{write-R-update}) or when it does not write to $R$ but is linearized before a concurrent \writefunc\ operation $W'$ that does write to $R$. The claim now follows from the atomicity of $R$.

It is easily seen that the algorithm is wait-free, since neither \readfunc, \writefunc\ or their recovery functions contain any loops, so each of these operations/functions terminates in a constant number of steps if it experiences no crashes.
\end{proof}

\section{A Detectable CAS Algorithm and Lower Bound}
\label{section: CAS}

We now present a wait-free detectable implementation of an $N$-process durable linearizable CAS object from (bounded-space) variables that support read/CAS primitive operations. As far as we know, this is the first bounded-space detectable CAS implementation. Our algorithm uses $\Theta(N)$ bits in addition to those storing the CAS object's value. We then prove that any such implementation requires $\Omega(N)$ bits, thus establishing that our algorithm is asymptotically space optimal.

%\vspace{-0.3cm}
\subsection{A Bounded-Space Recoverable CAS Algorithm}
%CAS objects supports a \emph{compare-and-swap} operation...

Our implementation of a detectable CAS object $O$ is presented by Algorithm \ref{alg:cas}. $O$'s state is represented by a shared variable $C$ (supporting primitive read/CAS operations) that stores both $O$'s value (initially $v_{init}$) and an $N$-bit vector $vec$ (all of whose components are initially $0$). Each process $p$ owns a private variable $RD_p$, storing $p$'s recovery data.

\paragraph{The \casfunc\ operation}
To perform a \casfunc\ operation, $p$ first reads $C$. If $O$'s current value differs from argument \emph{old}, the CAS should fail, so $p$ persists a \emph{false} response to $Ann_p.result$ and then returns \emph{false} (lines \ref{CASwrite-readC}-\ref{CASwrite-returnFalse}).
Otherwise, $p$ flips the bit $vec[p]$ (line \ref{CASwrite-flipBit}), persists the flipped value, and increments its checkpoint variable (lines \ref{CASwrite-persistToggle}-\ref{CASwrite-setCP1}).
Then, $p$ attempts to atomically both change $C$'s value and flip $vec[p]$ using an atomic \CAS\ (line \ref{CASwrite-executeCAS}). Finally, it persists the \CAS\ response and returns it (lines \ref{CASwrite-persist-response}-\ref{CASwrite-return}).

\paragraph{The \casrecover\ recovery function}
Upon recovery from a crash inside \casfunc, $p$ first checks if it crashed after persisting the response, in which case it returns it (lines \ref{CASwrite-recover-if-already-persisted}-\ref{CASwrite-recover-return-result}). Otherwise, if $Ann_p.CP$ is 0, implying $p$ surely crashed before attempting the \CAS, then it returns \emph{fail} (lines \ref{CASwrite-recover-if-cp-0}-\ref{CASwrite-recover-reinvoke}). Finally, $p$ must determine if it performed a successful \CAS\ before crashing. Observe that $p$ is the only process to ever change $vec[p]$, hence a successful \CAS\ operation flips $vec[p]$ and it will remain flipped until $p$'s next successful \CAS, whereas a failed \CAS\ attempt does not change $vec[p]$. Leveraging this observation, $p$ reads the vector stored in $C$ and returns true if $vec[p]$ has been changed (lines \ref{CASwrite-recover-readC}-\ref{CASwrite-recover-if-not-flipped-bit}, \ref{CASwrite-recover-set-result-true}-\ref{CASwrite-recover-return-true}). If the bit was not changed (implying that either the \CAS\ failed or the crash occurred before $p$ performed it), $p$ returns \emph{fail} (line \ref{CASwrite-recover-reinvoke1}).

A \readfunc\ operation simply reads $C$, extracts $O$'s value, writes it to $Ann_p.resp$ and returns it. Its recovery function re-invokes \readfunc\ if $Ann_p.resp=\init$ holds, otherwise it returns it. This simple code is not shown.
%A full correctness proof can be found in Appendix \ref{section: appendix-correctness-proofs}.

\begin{lemma}
	Algorithm \ref{alg:cas} is wait-free and satisfies durable linearizability and detectability.
\end{lemma}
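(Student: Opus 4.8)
The plan is to mirror the structure of the proof for Algorithm~\ref{bounded-space-recoverable-register}, establishing the three properties in turn: wait-freedom, durable linearizability, and detectability. Wait-freedom is immediate, since neither \casfunc, \readfunc, nor their recovery functions contain loops, so each terminates in a constant number of steps between consecutive crashes. The bulk of the work is in the linearizability/detectability argument, and the central invariant I would isolate first is the one the paragraph before the lemma already hints at: \emph{process $p$ is the only process that ever modifies $vec[p]$, and it does so only via a successful \CAS\ in line~\ref{CASwrite-executeCAS} (or the corresponding step in \casrecover), each such success flipping $vec[p]$; a failed \CAS\ attempt leaves $vec[p]$ untouched.} From this it follows that between two consecutive successful \casfunc\ operations of $p$, the value of $vec[p]$ in $C$ is constant, and it differs from the value $vec[p]$ held just before $p$'s pending attempt. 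This is the ABA-defeating mechanism, analogous to the toggle bits in the read/write algorithm.

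For durable linearizability, I would linearize a successful \casfunc\ at the point of its successful \CAS\ in line~\ref{CASwrite-executeCAS}; a failing \casfunc\ (one that returns \False\ because $O$'s value read in line~\ref{CASwrite-readC} differs from \emph{old}) is linearized at that read; a \readfunc\ is linearized at its read of $C$, as in the read/write case. Because $C$ is atomic and every state change to $O$'s value goes through a \CAS\ on $C$, the sequence of successful CASes together with the interleaved reads yields a valid sequential CAS history, and the response of each completed operation matches. The crash cases then need to be checked: (i) if $p$ crashes before line~\ref{CASwrite-setCP1} sets $Ann_p.CP$, then $p$ has not yet attempted the \CAS, so $O$'s state is unaffected and \casrecover\ returns \fail\ via the $Ann_p.CP=0$ branch; (ii) if $p$ crashes after persisting its response, \casrecover\ returns the persisted response, consistent with the already-fixed linearization point; (iii) the interesting case is a crash after $Ann_p.CP$ was incremented but before the response was persisted, where \casrecover\ must decide whether the \CAS\ succeeded.

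The main obstacle is case (iii), and the key claim to prove is: \emph{having read $\langle v, vec\rangle$ from $C$ in line~\ref{CASwrite-readC} and persisted the intended flipped bit into $RD_p$, the recovery code's read of $C$ in line~\ref{CASwrite-recover-readC} sees $vec[p]$ equal to the flipped value if and only if $p$'s \CAS\ in line~\ref{CASwrite-executeCAS} succeeded before the crash.} The ``if'' direction uses that no other process writes $vec[p]$ and that $p$ has not begun a subsequent \casfunc\ (so its bit, once flipped by the successful \CAS, stays flipped through recovery). The ``only if'' direction is where care is needed: if the \CAS\ failed, then at the moment of the failed attempt $vec[p]$ still held its pre-flip value, and since $p$ performs no further successful \CAS\ before recovery and no one else touches $vec[p]$, the bit read in line~\ref{CASwrite-recover-readC} is still the pre-flip value — so recovery correctly returns \fail\ in line~\ref{CASwrite-recover-reinvoke1}. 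I would also note that a re-invoked \casrecover\ after multiple crashes behaves identically, since all the state it consults ($RD_p$, $Ann_p.CP$, $Ann_p.result$, and $vec[p]$ in $C$) is non-volatile and monotone in the relevant sense. Finally, the \readfunc\ recovery argument is the same one-line argument as in the read/write proof: a re-invoked \readfunc\ reads $C$ again and returns $O$'s value at that (new) linearization point, and a completed one returns the value of the latest CAS linearized before it.
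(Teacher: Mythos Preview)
Your proposal is essentially the paper's own argument: the same central invariant (only $p$ ever changes $vec[p]$, and only via a successful \CAS\ in line~\ref{CASwrite-executeCAS}), the same case split on $Ann_p.CP$ and $Ann_p.result$ in recovery, and the same ``if and only if'' claim for case~(iii). Two small inaccuracies are worth fixing: first, your parenthetical ``(or the corresponding step in \casrecover)'' is wrong---\casrecover\ never performs a \CAS, it only reads $C$; second, you only linearize a failing \casfunc\ in the early-exit case (line~\ref{CASwrite-val-diff-old}), but you must also handle the crash-free case where the \CAS\ in line~\ref{CASwrite-executeCAS} itself fails---the paper linearizes that at the moment just after some other process's successful \CAS\ changed $C$ (so $C$'s value is no longer \emph{old}), which falls within $Op$'s interval.
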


\begin{proof}
	Consider an execution $\alpha$ of Algorithm \ref{alg:cas}. Assume process $p$ completes a \casfunc ($old, val$) operation $Op$ in $\alpha$ (either directly or by completing the recovery function). We prove that one of the following holds:
	1) $p$ successfully writes to $C$ exactly once and the return value of $Op$ is true, in which case this is $Op$'s linearization point;
	2) $p$ does not write to $C$ and $C$ contains a value different then $old$ at some point during $Op$, hence we can linearize $Op$ at this point and it returns false;
	or 3) a failure occurs before $Op$ wrote to $C$, in which case \casrecover\ returns \fail.

	We start by observing that if $Op$ does not experience a crash, then either $p$ reads a value different from $old$ in $C$, thus it returns false (lines \ref{CASwrite-val-diff-old}-\ref{CASwrite-returnFalse}), or $p$ performs a single \CAS\ in line \ref{CASwrite-executeCAS} and returns its response. The \CAS\ is successful only if the value stored in $C$ is $old$, in which case $p$ returns true. On the other hand, the \CAS\ fails only if another process performed a successful \CAS\ to $C$ after $p$ first read it in line \ref{CASwrite-readC}, hence the value of $C$ after it must be other than $old$, and this is also the linearization point of $Op$, which indeed returns false.

	Note that $p$ is the only process to ever update $vec[p]$ and the only place in which this update occurs is the \CAS\ of line \ref{CASwrite-executeCAS}. Moreover, this is the only place in the code where a \CAS\ is performed. Thus, since lines \ref{CASwrite-flipBit}-\ref{CASwrite-persistToggle} have to be executed before line \ref{CASwrite-executeCAS} (even in case of a crash), each successful \CAS\ to $C$ by $p$ will flip the bit $vec[p]$ stored in $C$, and it will remain flipped until $p$'s next successful \CAS.
	
	We proceed to prove that the lemma holds also if $Op$ experiences one or more crashes.
	If a crash occurs before $Op$ writes to $Ann_p.CP$ in line \ref{CASwrite-setCP1},
	then $p$ did not write to $C$ while executing $Op$. Thus, $Op$ was not linearized yet and \casrecover\ simply re-invokes $Op$ (line \ref{CASwrite-recover-reinvoke}).
	Otherwise, consider the case where a crash occurs after $p$ performs line \ref{CASwrite-setCP1}. If $p$ performed a successful \CAS\ at line \ref{CASwrite-executeCAS} before the crash, then the operation was already linearized. As per our previous observation, $C.vec[p] = RD_p$ will hold as long as $p$ does not perform another (successful) \CAS. Hence, the condition in line \ref{CASwrite-recover-if-not-flipped-bit} is evaluated as false, and $p$ persists $\True$ as its response and then returns (lines \ref{CASwrite-recover-set-result-true}-\ref{CASwrite-recover-return-true}).

	It remains to consider the case where the crash occurred when $p$ did not perform a successful \CAS\ at line \ref{CASwrite-executeCAS}, either because the \CAS\ failed or because $p$ crashed before performing line \ref{CASwrite-executeCAS}. In this case, we can consider $Op$ as not having been linearized yet, since it did not change the value of any variable that operations by other processes may read.
	%This can happen either because $p$ crashed before executing line \ref{CASwrite-executeCAS}, or since it executed line \ref{CASwrite-executeCAS}, but failed due to $C \neq \langle val, vec \rangle$.
	In both cases, $vec[p] \neq RD_p$ holds, since $vec[p]$ stores the old, un-flipped, value, whereas $RD_p$ stores the new, flipped, value. Moreover, no process but $p$ can change $vec[p]$. Thus, the condition of line \ref{CASwrite-recover-if-not-flipped-bit} is evaluated as true and the recovery function returns \fail\ in line \ref{CASwrite-recover-reinvoke1}.
\end{proof}

\begin{figure*}[!t]
	\removelatexerror
	
	\begin{algorithm}[H]
		
		\footnotesize
		
		\begin{flushleft}
			\textbf{Non-Volatile Shared variables}: $C$, supporting primitive read/CAS operations, initially $\langle v_{init}, \langle 0,\ldots,0 \rangle \rangle$ \\
			\textbf{Non-Volatile Private variables}: read/write register $RD_p$ containing boolean field, initially $\bot$
		\end{flushleft}
		
		\begin{multicols*}{2}

			\begin{procedure}[H]
				\caption{() \small \casfunc\ (old, new)}
				
				%$Ann_p := \langle \casfunc, old, new, 0, \init \rangle$ \;
				$\langle val, vec \rangle := C$ \label{CASwrite-readC}\;
				\uIf (\tcp*[f]{CAS failed; return}) {$val \neq old$ \label{CASwrite-val-diff-old}} {
					$Ann_p.result := \False$ \;
					\KwRet \False \label{CASwrite-returnFalse}
				}
				$newvec := flipBit(vec,p)$ \tcp*{flip \emph{vec}[p]} \label{CASwrite-flipBit}
				$RD_p := newvec[p]$ \tcp*{persist new bit} \label{CASwrite-persistToggle}
				$Ann_p.CP := 1$ \label{cas-CP} \tcp*{set check-point} \label{CASwrite-setCP1}
				$res := C.\CAS (\langle val, vec \rangle, \langle new, newvec \rangle)$ \label{CASwrite-executeCAS}\;
				$Ann_p.result := res$ \label{CASwrite-persist-response} \tcp*{persist response}
				\KwRet $res$ \label{CASwrite-return}  \;
			\end{procedure}
			
			\columnbreak
			
			\begin{procedure}[H]
				\caption{() \small \casrecover\ (old, new)}
				
				\uIf {$Ann_p.result \neq \init$ \label{CASwrite-recover-if-already-persisted}} {
					\KwRet $Ann_p.result$ \label{CASwrite-recover-return-result}
				}
				\uIf {$Ann_p.CP = 0$ \label{CASwrite-recover-if-cp-0}} {
					\KwRet \fail \label{CASwrite-recover-reinvoke}
				}
				$\langle val, vec \rangle := C$ \label{CASwrite-recover-readC}\;
				\uIf {$vec[p] \neq RD_p$ \label{CASwrite-recover-if-not-flipped-bit}} {
					\KwRet \fail \tcp*{CAS failed or not performed} \label{CASwrite-recover-reinvoke1}
				}
				$Ann_p.result := \True$ \tcp*{CAS was successful} \label{CASwrite-recover-set-result-true}
				\KwRet \True \label{CASwrite-recover-return-true} \;
			\end{procedure}

		\end{multicols*}
		\caption{Bounded-space detectable CAS object $O$, code for process $p$.}
		\label{alg:cas}
	\end{algorithm}
	
\end{figure*}

\subsection{A Space Lower Bound on Detectable CAS}

Algorithm \ref{alg:cas} uses $\Theta(N)$ shared memory bits, in addition to those required for storing the CAS object's value.
Let $\mathcal{V}$ denote the set of states that may be assumed by a CAS object $O$. Assuming that $\vert {\mathcal{V}} \vert \geq N$, we prove that any recoverable and detectable obstruction-free implementation $A$ of $O$ must have at least $2^{N-1}$ reachable configurations with distinct shared-memory states. This implies that $A$ uses at least $N-1$ shared-memory bits. If $\vert {\mathcal{V}} \vert = O(N)$, then only $O(\log N)$ bits are required for storing the CAS object's value, implying that  $\Omega(N)$ additional shared-memory bits are required for supporting detectability. Before proceeding with the proof, we need the following two definitions.  We say that configurations $C,D$ are \emph{memory-equivalent} if the values of each shared memory variable is the same in both configurations. We say that a step $s$ by some process $p$ is a \emph{modifying step} \cite{HendlerS08} w.r.t. to an operation $Op$ by another process $q$ in some configuration $C$, if the solo execution of $Op$ by $q$ after $C$ and after $C \circ s$ return different responses. Figure \ref{fig:impos-cas} illustrates the structure of the inductive construction of the proof that follows.
%We note that Theorem \ref{theorem:cas-lower-bound} holds for models assuming \emph{system-wide failures} as well.

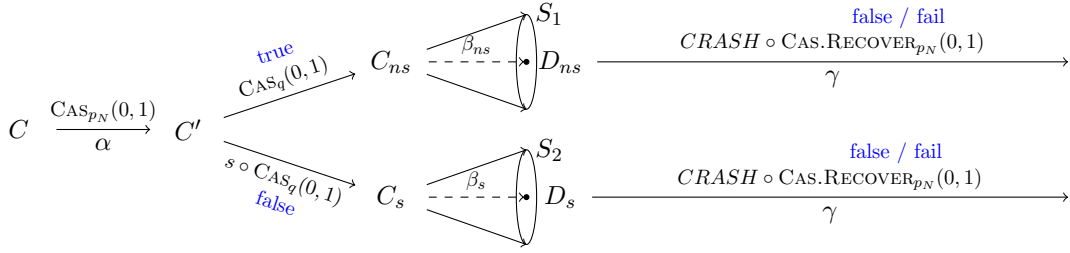
\begin{figure*}[!t]
	\centering
	\begin{tikzpicture}
	\begin{scope}[scale=0.9, transform shape]
	\node[circle, minimum height=1.1cm] (c) at (-1.5,0) {$C$};
	\node[circle, minimum height=1.1cm] (ct) at (1,0) {$C'$};

	\node[circle, minimum height=1.1cm] (cs) at (4,-1) {$C_{s}$};
	\node[circle, minimum height=1.1cm] (cns) at (4,1) {$C_{ns}$};
	\node[circle, scale=0,label=right:$S_1$] (tns) at (6,1.7) {};
	\node[circle, scale=0] (bns) at (6,0.3) {};
	\draw (6,1) ellipse (0.15cm and 0.7cm);
	\node[circle, scale=0.2, draw, fill] (mns) at (6,1) {};
	
	\node[circle, scale=0,label=right:$S_2$] (ts) at (6,-0.3) {};
	\node[circle, scale=0] (bs) at (6,-1.7) {};
	\draw (6,-1) ellipse (0.15cm and 0.7cm);
	\node[circle, scale=0.2, draw, fill] (ms) at (6,-1) {};
	
	\node[circle, scale=0] (rns) at (14,1) {};
	\node[circle, scale=0] (rs) at (14,-1) {};
	
	\node[circle] (dns) at (6.5,1) {$D_{ns}$};
	\node[circle] (ds) at (6.5,-1) {$D_s$};

	\draw[->, anchor=north] (c.east) -- node[below] {$\alpha$} node[above, scale=0.8] {$\casfunc_{p_N}(0,1)$} (ct.west);
	
	\path[->] (ct) edge[sloped, above, scale=0.8] node[align=center] {\textcolor{blue}{true}\\$\casfunc_q(0,1)$} (cns);
	\path[->] (ct) edge[sloped, below, scale=0.8] node[align=center] {$s \circ \casfunc_q(0,1)$\\\textcolor{blue}{false}} (cs);
	\draw[->, anchor=south, sloped, scale=0.7] (cns) -- node {}(tns.west);
	\draw[->, anchor=south, sloped, scale=0.7] (cns) -- node {}(bns.west);
	\draw[->, sloped, dashed, above, scale=0.8] (cns) -- node[] {$\beta _{ns}$}  (mns);
	\draw[->, anchor=south, sloped, scale=0.7] (cs) -- node {}(ts.west);
	\draw[->, anchor=south, sloped, scale=0.7] (cs) -- node {}(bs.west);
	\draw[->, sloped, dashed, above, scale=0.8] (cs) -- node[] {$\beta _{s}$}  (ms);
	
	\path[->] (ds) edge[sloped] node[below, align=center] {$\gamma$} node[above, scale=0.8, align=center] {\textcolor{blue}{$\qquad \qquad \qquad $ false / fail}\\$CRASH \circ \casrecover_{p_N}(0,1)$} (rs.west);
	
	\path[->] (dns) edge[sloped] node[below, align=center] {$\gamma$} node[above, scale=0.8, align=center] {\textcolor{blue}{$\qquad \qquad \qquad$ false / fail}\\$CRASH \circ \casrecover_{p_N}(0,1)$} (rns.west);
	
	\end{scope}
	\end{tikzpicture}
	\caption{The induction step of Theorem \ref{theorem:cas-lower-bound}. Response of completed operations are in blue font.}
	\label{fig:impos-cas}
\end{figure*}

\begin{theorem}
	\label{theorem:cas-lower-bound}
Let $A$ be an $N$-process obstruction-free implementation of a CAS object $O$ (using any primitive operations) satisfying durable linearizability and detectability, assuming values from a domain $\mathcal{V}$ of size at least $N$. Then $A$ has at least $2^{N-1}$ different reachable configurations, no two of which are memory-equivalent.
\end{theorem}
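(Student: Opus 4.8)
The plan is to build, by induction on $k$ for $1 \le k \le N$, a collection of $2^{k-1}$ reachable configurations, no two of which are memory-equivalent, such that in each of them a distinguished process $p_N$ has a pending $\casfunc_{p_N}(0,1)$ operation whose recovery behaviour encodes a $(k-1)$-bit string. The base case $k=1$ is the single configuration $C'$ reached from the initial configuration by running $\casfunc_{p_N}(0,1)$ solo until just before it would take its linearizing step (such a point exists because $A$ is obstruction-free and the operation must eventually linearize when run solo against the initial value $0$). For the induction step, illustrated in Figure~\ref{fig:impos-cas}, I start from a configuration $C$ obtained at stage $k-1$, run $\alpha = \casfunc_{p_N}(0,1)$ solo from $C$ to reach $C'$; because $p_N$'s operation has not yet linearized, a solo $\casfunc_q(0,1)$ by a fresh process $q$ from $C'$ must return \textbf{true} (the object's value is still $0$), reaching $C_{ns}$. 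On the other hand there must exist a modifying step $s$ (by some process, applied at $C'$) after which that same solo $\casfunc_q(0,1)$ returns \textbf{false}; indeed, running $p_N$'s operation one step further, or running any operation that changes $O$'s value away from $0$, flips $q$'s response, and a standard argument extracts a single modifying step $s$ — this yields $C_s$. From $C_{ns}$ and $C_s$ I run extension executions $\beta_{ns}$, $\beta_{s}$ that drive $p_N$'s pending operation and $q$'s operation to completion and reach configurations $D_{ns}$, $D_s$ in which $p_N$ again has a fresh pending $\casfunc_{p_N}(0,1)$ poised at its linearization point (re-invoking it if necessary, since after the object's value moves between $0$ and $1$ it can be reset and a new such operation started); crucially, in $D_{ns}$ the operation $\casfunc_q(0,1)$ was linearized \emph{true}, while in $D_s$ it was linearized \emph{false}, so detectability forces $q$'s recovery code (hence, by an indistinguishability argument, also the memory) to differ between the two. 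Finally I argue the two families of $2^{k-1}$ configurations built on top of $D_{ns}$ and on top of $D_s$ are pairwise memory-inequivalent: within each family by the inductive hypothesis, and across the two families because a $\casrecover_{p_N}(0,1)$ run after a crash from any descendant of $D_{ns}$ returns a response consistent with $q$'s earlier success, whereas from any descendant of $D_s$ it returns one consistent with $q$'s failure — and these recovery outputs are determined by the shared memory alone (the crash wipes all local state), so the shared memory must differ.

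Concretely, the $2^{k}$ configurations at stage $k$ are indexed by the binary strings of length $k-1$: appending a $0$ corresponds to the "$ns$" branch and appending a $1$ to the "$s$" branch, and the claim is that the shared-memory contents — as witnessed by the sequence of responses that a sequence of crash/$\casrecover$ probes would return — determine the string. This is the detectability-based distinguishing argument, repeated at every level. Obstruction-freedom is used only to guarantee that every solo run of a $\casfunc$ or $\casrecover$ terminates, so that the probes and the poising steps are well-defined; durable linearizability is used to pin down the required response values (a $\casfunc(0,1)$ against value $0$ must be linearizable \textbf{true}, against value $\ne 0$ must return \textbf{false}), and the assumption $|\mathcal V| \ge N$ guarantees we never run out of distinct values to cycle the object through across the $N$ levels of the induction (at level $k$ we may need the $k$-th fresh value to force a modifying step without colliding with values already in play).

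The main obstacle I expect is the careful design of the extension executions $\beta_{ns}$ and $\beta_s$ and the proof that they preserve memory-inequivalence \emph{across} the two branches at every subsequent level — that is, that the distinguishing power of "$q$ was linearized true here versus false here" survives being buried under $k-1$ further rounds of the construction. The clean way to handle this is to make the invariant explicitly about recovery probes: maintain that for each stage-$k$ configuration there is a fixed finite sequence of (crash, $\casrecover$ by a designated process, solo-run-to-poise) operations whose response vector is distinct for distinct configurations, and then show each induction step extends such a distinguishing sequence by one coordinate (the $\casrecover_q$ response after the first crash). A secondary subtlety is ensuring process $q$ is genuinely "fresh" at each level so that with $N$ processes we get exactly $N$ levels and $2^{N-1}$ configurations; this is where the process budget is tight and must be tracked.
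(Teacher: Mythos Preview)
Your proposal has a genuine gap in the cross-branch distinguishing argument. You complete $p_N$'s pending $\casfunc_{p_N}(0,1)$ at each level and re-invoke a fresh one, so that at every stage the pending operation of $p_N$ is a \emph{new} instance. But then the probe you propose---a crash followed by $\casrecover_{p_N}(0,1)$---recovers that \emph{fresh} instance, whose outcome depends only on the object's current state, not on whether $q$'s operation several levels earlier returned \True\ or \False. In particular, from a descendant of $D_{ns}$ and a memory-equivalent descendant of $D_s$, the fresh $\casrecover_{p_N}$ would behave identically, so it cannot separate the two families. You also appeal at one point to ``$q$'s recovery code'', but $q$'s $\casfunc_q(0,1)$ has already \emph{completed} (it returned \True\ or \False), so there is no pending operation of $q$ to recover; detectability gives you nothing there.

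The paper avoids this by never completing the distinguished process's operation. Its induction is on $N$: process $p_N$ runs $\casfunc_{p_N}(0,1)$ up to (and, in one branch, including) a modifying step $s$, then $q$ runs $\casfunc_q(0,1)$ to completion, and from that point on $p_N$ is \emph{halted for the remainder of the construction}. The induction hypothesis is applied to the remaining $N$ processes $p_1,\ldots,p_{N-1},q$ with the value $0$ removed from the domain, yielding two sets $S_1,S_2$ of $2^{N-1}$ memory-inequivalent configurations each, all reached by $p_N$-free executions. The cross-branch argument then works cleanly: in any configuration of $S_1$ (built above $C_{ns}$) the operation $\casfunc_q(0,1)$ returned \True\ and the value never revisits $0$, so $p_N$'s still-pending $\casfunc_{p_N}(0,1)$ cannot have been linearized and $\casrecover_{p_N}$ must return \False\ or \fail; in any configuration of $S_2$ (built above $C_s$) the step $s$ was taken and $\casfunc_q(0,1)$ returned \False, so linearizability forces $\casrecover_{p_N}$ to return \True. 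This is also exactly where $|\mathcal{V}|\ge N$ is used: one value is excised from the domain at each level of the induction so that the frozen process's CAS can never succeed during the sub-construction---not, as you suggest, merely to ``force a modifying step''. If you want to salvage your explicit iterative construction, the fix is to freeze a \emph{different} process at each level and never wake it, which unrolls to precisely the paper's argument.
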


\begin{proof}
We prove the claim by induction on $N$. The claim holds trivially for $N=1$. Assume the claim holds for $N$ and let $A$ be an ($N$+1)-process  obstruction-free recoverable implementation of a CAS object $O$ satisfying durable linearizability and detectability, assuming values from a domain $\mathcal{V}$ of size at least $N+1$. Denote the processes using $A$ by $p_1, p_2, \ldots, p_N, q$. Assume also WLOG that $\{0, 1, \ldots, N\} \subseteq {\mathcal{V}}$ and that $O$'s value in the initial configuration $C$ is $0$. Starting from $C$, we let $p_N$ perform a $\casfunc_{p_N}(0,1)$ operation until it is about to perform the first modifying step $s$ with respect to a $\casfunc_q(0,1)$ operation by process $q$. We denote by $\alpha$ the prefix of $p_N$'s execution up to (but not including) step $s$. That is, a solo execution of $\casfunc_q(0,1)$ by $q$ after $C \circ \alpha$ returns \emph{true}, while its solo execution after $C \circ \alpha \circ s$ returns \emph{false}. Such a modifying step exists because $A$ is obstruction-free, a solo execution of $\casfunc_q(0,1)$ starting from $C$ returns \emph{true}, but a solo execution of $\casfunc_q(0,1)$ after $\casfunc_{p_N}(0,1)$ completes returns \emph{false}.
We define the following two configurations:
	$$ C_{ns} = C \circ \alpha \circ \casfunc_q(0,1) \qquad C_s = C \circ \alpha \circ s \circ \casfunc_q(0,1) $$
\noindent that are reached after $\casfunc_q(0,1)$ returns. The executions leading to $C_{ns}$ and $C_s$ exist, since $A$ is obstruction-free. From linearizability, $O$'s value is $1$ in both $C_s$ and $C_{ns}$, that is, a \readfunc(\ ) (resp. a \casfunc$(1, 0)$) operation on $O$, performed to completion immediately after either $C_s$ or $C_{ns}$, must return $1$ (resp. return \emph{true} and change the object's value to $0$). Moreover, $\casfunc_q(0,1)$ returns \emph{true} in the execution leading to $C_{ns}$  and \emph{false} in the execution leading to $C_{s}$.
Notice that all processes but $p_N$ are idle in $C_{ns}$. Moreover, if we fix $C_{ns}$ as an initial configuration, halt $p_N$ starting from this point on and restrict processes $p_1, \ldots, p_{N-1}, q$ to perform CAS operations with arguments from the domain ${\mathcal{V}} \setminus \{0\}$, then we obtain an $N$-process algorithm $A'$ for which we can apply the induction hypothesis. Thus, there is a set $S_1$ of at least $2^{N-1}$ configurations reachable from $C_{ns}$ in $p_N$-free executions, no two of which are memory-equivalent. The same argument can be applied to configuration $C_s$, resulting in a second set $S_2$ of $2^{N-1}$ configurations reachable from $C_s$, no two of which are memory-equivalent.

To complete the proof, we show that no configuration reachable from $S_1$ can be memory-equivalent with any configuration reachable from $S_2$. Intuitively, this is because the modifying step $s$ must write to shared memory and does not modify any of $p_N$'s local variables. Thus, if we can extend both $C_s$ and $C_{ns}$ and reach two memory-equivalent configurations, then $p_N$ will behave the same after a crash in both of them, which leads to a contradiction. The formal proof follows.
%\oh{The main argument goes as follows: the modifying step $s$ must be to the shared memory, and not to $p_N$'s local variables. Therefore, in case we can extend both $C_s$ and $C_{ns}$ to get two memory-equivalent configurations, $p_N$ will behave the same after a crash in both cases, which leads to a contradiction. In the following we prove that no configuration from $S_1$ can be memory-equivalent with any configuration from $S_2$.}
	
Assume towards a contradiction that there exist executions $\beta_{ns}$ and $\beta_s$ starting from $C_{ns}$ and $C_s$, respectively, leading to memory-equivalent configurations $D_{ns} \in S_1$ and $D_s \in S_2$. Note that both $\beta_{ns}$ and $\beta_s$ are $p_N$-free. We extend $D_{ns}$ by a system crash, followed by a recovery of $p_N$, followed by an execution in which $p_N$ performs its recovery function $\emph{\text{\casrecover}}_{p_N}(0,1)$ to completion. Denote this extension by $\gamma$. Since $\emph{\text{\casfunc}}_{p_N}(0,1)$ can be linearized only after $\emph{\text{\casfunc}}_q(0,1)$ (as the latter returns \emph{true} in the execution leading to $C_{ns}$) and as $O$'s state is positive all throughout $\beta_{ns}$, $\emph{\text{\casrecover}}_{p_N}(0,1)$ must return either \emph{false} or \emph{fail} in $\gamma$.

Since all of $p_N$'s local variables are reset after the crash,
%and from our assumption that
and since $D_{ns}$ and $D_s$ are memory-equivalent, if the system crashes after $D_s$, and then $p_N$ recovers by performing $\emph{\text{\casrecover}}_{p_N}(0,1)$ to completion, it will return \emph{false} or \emph{fail} as well. However, as $\emph{\text{\casfunc}}_q(0,1)$ returned \emph{false} in the execution leading to $C_s$, from linearizability, $\emph{\text{\casrecover}}_{p_N}(0,1)$ must return \emph{true}. This is a contradiction.
\end{proof}

%We note that Theorem \ref{theorem:cas-lower-bound} holds for models assuming \emph{system-wide failures} as well.

\section{Detectable Algorithms Require Auxiliary State}
\label{section: auxiliary-state}

The detectable algorithms we presented in Sections
\ref{section: read-write}-\ref{section: CAS} use \emph{auxiliary state}, provided via the non-volatile $Ann_p.CP$ field used for managing checkpoints.
%Some of the detectable algorithms presented in \cite{attiya2018nesting} use similar system-provided auxiliary data as well.
As we've mentioned in Section \ref{section:introduction}, detectable algorithms often require auxiliary state that helps them infer where in the execution the failure occurred \cite{AttiyaBH18,FriedmanHMP18}. We now formalize this notion.

\begin{definition}
\label{definition:aux}
Let $Op$ be a recoverable operation. We say that auxiliary state is provided to $Op$ via NVM, if in-between every two successive invocations of $Op$, a write is made to a non-volatile variable that can be accessed by $Op$. We say that auxiliary state is provided to $Op$ via operation arguments, if the arguments to $Op$ contain data not specified by the object's abstract operation.
\end{definition}

For example, in our model, the system provides auxiliary state via $Ann_p.CP$, since it sets its value to $0$ before any invocation of a recoverable operation by $p$. The queue algorithm of \cite{FriedmanHMP18} provides auxiliary state via unique operation identifiers passed as arguments.

In the following, we prove that \emph{the usage of auxiliary state is inevitable for obtaining detectable implementations for a large class of objects} that includes read/write, CAS, resettable test-and-set and FIFO queue objects.
%More specifically, we prove that for any such implementation, either the system provides auxiliary state, or auxiliary state must be provided via operation arguments.
\remove{%%% MOVE TO DISCUSSION
Some works considered a weaker notion of detectability than ours, where upon a recovery a process can infer the response of its last \emph{completed} operation. For example, \cite{DBLP:conf/opodis/BerryhillGT15, DBLP:conf/spaa/CohenGZ18} presented universal constructions that can be made detectable under this weaker definition. The proofs we present below do not hold for this definition. In contrast, in this work we consider a variant of detectability which requires a process to infer to response of its last invoked operation, hence the recovery function needs to distinguish between different invocations of the same operation, as we later formalize.
Our proof holds for any implementation satisfying the following weak progress condition (which extends \cite[Definition 2]{attiya2008tight}).
}%%%%%% MOVE TO DISCUSSION

\begin{definition}
A recoverable implementation $A$ satisfies \emph{weak obstruction-freedom} if for every process $p$, starting from any configuration (reached by an execution of $A$) in which all processes other than $p$ are idle, if $p$ runs by itself while performing an operation or a recovery function and incurs no crashes, then it completes it.
\end{definition}

It is easily seen that weak obstruction-freedom is satisfied by any recoverable algorithm that satisfies obstruction-freedom or deadlock-freedom. Our proof holds for the class of \emph{doubly-perturbing} objects, a notion we define next.

%\oh{[[OHAD: For simplicity we can restrict the discussion for oblivious objects only, and add a note that it holds for the general case as well]]}

\begin{definition}
\label{def:doubly-perturbing}
Given an object $O$ and a sequential history $H$, we say that an operation $\Op$ is \emph{perturbing after H}, if there exists an operation $\Op'$ by a different process such that $\Op'$ returns different responses in $H \circ \Op \circ \Op'$ and in $H \circ \Op'$. We also say that \emph{Op is perturbing with respect to $\Op'$ after H}. We say that $O$ is \emph{doubly-perturbing} if there exists an operation $\Op_p$ by some process $p$ and a sequential history $H_1$ of $O$ such that the following conditions hold:
\begin{enumerate}
\item  $\Op_p$ is perturbing with respect to some $\Op'$ after $H_1$.
\item $H_1 \circ \Op_p \circ \Op'$ has a $p$-free extension, resulting in a sequential history $H_2$, such that $\Op_p$ is perturbing after $H_2$.
\end{enumerate}

We say that $\Op_p$ \emph{witnesses} that $O$ is doubly-perturbing.

\end{definition}

It is easily shown that many widely-used objects
%, such as read/write, counter, compare-and-swap, fetch-and-add and FIFO queue 
are doubly-perturbing. We now prove that a read/write object is doubly perturbing. In the appendix, we provide proofs that establish that the counter, CAS, fetch-and-add and FIFO queue objects are doubly-perturbing as well.

\begin{lemma}
	A read/write object is doubly-perturbing.
\end{lemma}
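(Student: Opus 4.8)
The plan is to exhibit an explicit witnessing operation and history for a read/write object satisfying Definition \ref{def:doubly-perturbing}. The natural choice is to take $\Op_p$ to be a $\writefunc_p(1)$ operation (assuming the register holds values $0,1$, with initial value $0$), let $\Op'$ be a $\readfunc_q(\ )$ operation by a different process $q$, and let $H_1$ be the empty history. First I would verify condition 1: after the empty history, $\readfunc_q(\ )$ returns $0$ (the initial value), while in $H_1 \circ \Op_p \circ \Op' = \writefunc_p(1) \circ \readfunc_q(\ )$ the read returns $1$; since $0 \neq 1$, $\Op_p$ is perturbing with respect to $\Op'$ after $H_1$.

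Next I would handle condition 2, which requires a $p$-free extension of $H_1 \circ \Op_p \circ \Op' = \writefunc_p(1) \circ \readfunc_q(\ )$ to a history $H_2$ after which $\Op_p$ is again perturbing (with respect to some operation, possibly a different one). The idea is to append a $p$-free write that overwrites the value $1$: take the extension to be a single $\writefunc_q(0)$ operation by $q$, so that $H_2 = \writefunc_p(1) \circ \readfunc_q(\ ) \circ \writefunc_q(0)$. After $H_2$ the register holds $0$. Now I would pick the second perturbing witness operation: let $\Op''$ be a $\readfunc_q(\ )$ (or $\readfunc_r(\ )$ for yet another process, but $q$ suffices since the definition only requires "a different process" than $p$). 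After $H_2$, $\readfunc_q(\ )$ returns $0$, whereas after $H_2 \circ \Op_p = \writefunc_p(1) \circ \readfunc_q(\ ) \circ \writefunc_q(0) \circ \writefunc_p(1)$ the read returns $1$. Again $0 \neq 1$, so $\Op_p$ is perturbing after $H_2$, establishing condition 2.

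The only mild subtlety — and the step I'd be most careful about — is the bookkeeping around which processes appear where: Definition \ref{def:doubly-perturbing} demands that the extension from $H_1 \circ \Op_p \circ \Op'$ to $H_2$ be $p$-free, so I must make sure the overwriting write is performed by $q$ (or any process other than $p$), not by $p$; and the perturbing operation witnessing condition 2 must again be by a process other than $p$, which is satisfied by using $q$'s read. I should also note explicitly that a two-valued register suffices, so this works even for the most restricted read/write object, and that no assumption on $|\mathcal{V}|$ beyond $|\mathcal{V}| \geq 2$ is needed. With $\Op_p = \writefunc_p(1)$, $H_1 = \varepsilon$, $\Op' = \readfunc_q(\ )$, and the extension $\writefunc_q(0)$, all requirements of Definition \ref{def:doubly-perturbing} are met, so a read/write object is doubly-perturbing, and $\writefunc_p(1)$ witnesses this. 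I don't anticipate any real obstacle here; the argument is a direct unfolding of the definition.
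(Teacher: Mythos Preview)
Your proposal is correct and follows essentially the same approach as the paper's proof: both take $\Op_p=\writefunc_p(v_1)$, $H_1$ empty, $\Op'=\readfunc_q$, extend by $\writefunc_q(v_0)$ to obtain $H_2$, and use $\readfunc_q$ again as the second witness. Your version is slightly more explicit about the bookkeeping (the $p$-freeness of the extension and the two-value sufficiency), but the construction is identical.
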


\begin{proof}
	Consider a read/write object $O$ over a domain of values including at least two distinct values $v_0$, $v_1$, initialized to $v_0$. We claim that $write_p(v_1)$ witnesses that $O$ is doubly-perturbing. For any process $q \neq p$, $write_p(v_1)$ is perturbing w.r.t. $read_q$ after the empty sequential history. This satisfies the first condition of Definition \ref{def:doubly-perturbing}. Moreover, $write_p(v_1) \circ read_q$ can be extended by a $write_q(v_0)$ operation, resulting in a sequential history $H_2=write_p(v_1) \circ read_q \circ write_q(v_0)$, such that (a second instance of) $write_p(v_1)$ is perturbing w.r.t. (a second instance of) $read_q$ operation after $H_2$. This satisfies the second condition of Definition \ref{def:doubly-perturbing}. Thus, $O$ is doubly-perturbing.
\end{proof}

%For lack of space, these proofs are provided in Appendix \ref{section: appendix-doubly-perturbing}.
The notion of doubly-perturbing objects bears similarity to the notion of perturbable objects defined by Jayanti et al. \cite{JayantiTT00}. Although most common perturbable objects (including read/write, counter, compare-and-swap, swap and fetch-and-add objects) are also doubly-perturbing, the two classes of objects are incomparable.
%as we prove in Appendix \ref{section: appendix-doubly-perturbing}, 
We now prove that there are perturbable objects that are not doubly-perturbing (e.g., a max register \cite{AspnesAC12}). In the appendix we prove that a bounded counter is doubly-perturbing but not perturbable.

\begin{lemma}
	\label{lemma:max-register}
	A max register object is not doubly-perturbing.
\end{lemma}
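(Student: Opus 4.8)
The plan is to show that a max register fails the second condition of Definition \ref{def:doubly-perturbing}, so that no operation can witness that it is doubly-perturbing. Recall a max register supports $\mathsf{WriteMax}(v)$, which records $v$ if $v$ exceeds the stored value (and otherwise does nothing), and $\mathsf{ReadMax}()$, which returns the maximum value ever written. First I would argue that only $\mathsf{WriteMax}$ operations can be perturbing at all, since a $\mathsf{ReadMax}$ never changes the object's state and hence cannot alter the response of any subsequent operation; so if $\Op_p$ witnesses double-perturbation, then $\Op_p = \mathsf{WriteMax}_p(v)$ for some value $v$, and the operation $\Op'$ with respect to which it is perturbing after $H_1$ must be some $\mathsf{ReadMax}_q$ (a $\mathsf{WriteMax}_q(u)$ returns no informative response in the usual formalization — if one allows it to return an ack or the current max, the same argument goes through with a short extra case).

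Next I would extract the key structural fact: after $H_1 \circ \Op_p \circ \Op' = H_1 \circ \mathsf{WriteMax}_p(v) \circ \mathsf{ReadMax}_q$, the stored maximum is $m := \max(M, v)$, where $M$ is the max value written in $H_1$; and since $\Op_p$ was perturbing with respect to $\mathsf{ReadMax}_q$ after $H_1$, we must have had $v > M$, so $m = v$. Now condition~2 requires a $p$-free extension of $H_1 \circ \mathsf{WriteMax}_p(v) \circ \mathsf{ReadMax}_q$ to a history $H_2$ after which a fresh instance of $\Op_p = \mathsf{WriteMax}_p(v)$ is again perturbing. But any $p$-free extension consists of $\mathsf{ReadMax}_q$ and $\mathsf{WriteMax}_q(\cdot)$ operations, which can only leave the stored maximum at a value $m' \ge m = v$. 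A second $\mathsf{WriteMax}_p(v)$ applied after $H_2$ therefore does nothing — the stored value is already $\ge v$ — so for every operation $\Op''$ by every other process, $H_2 \circ \mathsf{WriteMax}_p(v) \circ \Op''$ and $H_2 \circ \Op''$ induce the same object state and hence the same response of $\Op''$. Thus $\Op_p$ is not perturbing after $H_2$, contradicting condition~2. Since this holds for every choice of $\Op_p$ and $H_1$, no operation witnesses that the max register is doubly-perturbing.

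I expect the main obstacle to be pinning down the formalization of $\mathsf{WriteMax}$'s response value and making the case analysis on $\Op'$ airtight: if $\mathsf{WriteMax}(u)$ is modeled as returning only an acknowledgement, it is never perturbing with respect to anything (its response is constant), so $\Op'$ must be a $\mathsf{ReadMax}$ and the argument above is complete; if instead it returns the resulting maximum, one must additionally observe that the response of $\mathsf{WriteMax}_q(u)$ after $H_2$ — namely $\max(m', u) \ge v$ — is unaffected by an intervening second $\mathsf{WriteMax}_p(v)$, which is immediate from the same monotonicity observation. Either way the crux is the monotonicity of the stored maximum under the $p$-free extension: once $v$ has been written, it can never again be ``news,'' which is precisely what condition~2 of double-perturbation demands and precisely what a max register cannot supply. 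A secondary, minor point is to handle the case $v \le M$ (where $\Op_p$ is not even perturbing after $H_1$, so condition~1 already fails), which I would dispatch in one line at the start.
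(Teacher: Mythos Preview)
Your proposal is correct and follows essentially the same approach as the paper: both argue that a $\mathsf{ReadMax}$ cannot witness double-perturbation because it leaves the state unchanged, and that a $\mathsf{WriteMax}_p(v)$, once linearized, forces the stored value to be at least $v$, so a second invocation of the same $\mathsf{WriteMax}_p(v)$ is a no-op and cannot perturb any subsequent operation. Your version is more explicit about unpacking condition~2 (deriving $v>M$, tracking the stored value through the $p$-free extension, and handling the possible return-value conventions for $\mathsf{WriteMax}$), whereas the paper compresses the same monotonicity observation into two sentences; but the underlying idea is identical.
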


\begin{proof}
	Consider a max register object $O$, which supports a $writeMax(v)$ operation and a $read()$ operation that returns the largest value written to the register preceding it. Clearly, a $read$ operation cannot witness that $O$ is doubly-perturbing, as it cannot be observed by other operations. As for a $writeMax(v)$ operation by some process $p$, once the operation is lineraized any read must return a value $v$ or higher. Thus, invoking $writeMax(v)$ for a second time cannot modify $O$'s value, and thus cannot change the response of any other operation.
\end{proof}

\newcommand{\maxfunc}{\mbox{\sc Write-Max}}
\newcommand{\maxrecover}{\mbox{\sc Write-Max.Recover}}

Lemma \ref{lemma:max-register} proves that max register is not doubly-perturbing. However, it is known to be perturbable \cite{JayantiTT00}. Thus, Theorem \ref{theorem:auxiliary-data-required} below does not apply for a max register object. This raises the question of whether there is an obstruction-free implementation of a max-register satisfying durable linearizability and detectability that does not use auxiliary state. We now show that this is indeed the case. 

Algorithm \ref{max-register-algorithm} presents the pseudo-code of a detectable max register implementation that uses no auxiliary state.
The max register is composed of an integers array $MR$, such that process $p$ is associated with entry $MR[p]$. To perform \maxfunc $(val)$, $p$ simply checks if the value in $MR[p]$ is smaller than $val$, and if so updates it. To perform \readfunc, $p$ uses a local array. For simplicity of presentation, we use an array-assignment/comparison notation as a shorthand for copying/comparing the array entry by entry. Process $p$ repeatedly copies the contents of $MR$ until the first successful double collect. Upon success, $p$ managed to obtain a valid snapshot of $MD$, so the largest value in $MD$ was the value of the max register at some point in the execution interval of \readfunc.
The recovery function of each of these operations simply re-invokes the operation, and thus the code is not shown.

\begin{figure*}[!t]
	\removelatexerror
	
	\begin{algorithm}[H]
		
		\footnotesize
		
		\begin{flushleft}
			\textbf{Non-Volatile Shared variables}: integer array $MR[N]$ initially all 0 \\
		\end{flushleft}
		
		\begin{multicols*}{2}
			
			\begin{procedure}[H]
				\caption{() \small \maxfunc\ (val)}
				
				\uIf {$MR[p] < val$} {
					$MR[p] := val$	
				}
				\KwRet \Ack
			\end{procedure}
			
			\columnbreak
			
			\begin{procedure}[H]
				\caption{() \small \readfunc\ ()}
				
				$a[N]$, initially all 0 \;
				\While{$a \neq MR$}{
					$a := MR$
				}
				$res :=$ highest value in $a$ \;
				$Ann_p.result := res$ \;
				\KwRet $res$
			\end{procedure}
			
		\end{multicols*}
		\caption{recoverable Max-Register object $O$, code for process $p$.}
		\label{max-register-algorithm}
	\end{algorithm}
	
\end{figure*}

Next, we present our impossibility result.
The following two definitions are required for the proof.  We say that configurations $C,D$ are \emph{indistinguishable} to process $p$, denoted by $C \stackrel{p}{\sim} D$,  if the values of all shared-memory variables, as well as those of $p$'s local variables, are the same in both configurations. We say that $C \stackrel{Q}{\sim} D$ for a set of processes $Q$ if $C \stackrel{p}{\sim} D$ for any $p \in Q$.
Figure \ref{fig:impos} depicts the structure of the execution constructed by our proof.
We note that Theorem \ref{theorem:auxiliary-data-required} holds regardless of the primitive operations used by the implementation.

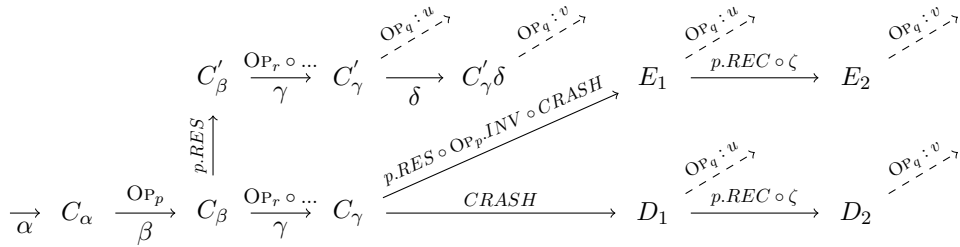
\begin{figure*}[b]
	\centering
	\begin{tikzpicture}
	\begin{scope}[scale=0.9, transform shape]
	\node[circle, minimum height=1.1cm, dotted] (ca) at (0,0) {$C_{\alpha}$};
	\node[circle, scale=0, draw] (c0) at (-1,0) {};
	\node[circle, minimum height=1.1cm, dotted] (cb) at (2,0) {$C_{\beta}$};
	\node[circle, minimum height=1.1cm, dotted] (cg) at (4,0) {$C_{\gamma}$};
	\node[circle, minimum height=1.1cm, dotted] (d1) at (8.5,0) {$D_1$};
	\node[circle, minimum height=1.1cm, dotted] (d2) at (11.5,0) {$D_2$};

	\node[circle, minimum height=1.1cm, dotted] (cbt) at (2,2) {$C_{\beta}^{'}$};
	\node[circle, minimum height=1.1cm, dotted] (cgt) at (4,2) {$C_{\gamma}^{'}$};
	\node[circle, minimum height=1.1cm, dotted] (cgtd) at (6,2) {$C_{\gamma}^{'} \delta$};
	
	\node[circle, minimum height=1.1cm, dotted] (e1) at (8.5,2) {$E_1$};
	\node[circle, minimum height=1.1cm, dotted] (e2) at (11.5,2) {$E_2$};
	
	\node[circle, scale=0] (cgtr) at (5.5,2.9) {};
	\node[circle, scale=0] (cgtdr) at (7.5,2.9) {};
	\node[circle, scale=0, draw] (e1r) at (10,2.9) {};
	\node[circle, scale=0, draw] (d1r) at (10,0.9) {};
	\node[circle, scale=0, draw] (e2r) at (13,2.9) {};
	\node[circle, scale=0, draw] (d2r) at (13,0.9) {};

	% \draw[-, anchor=west] (c3t.south) -- node[left] {$p.RESP $} node[right] {$p.\underline{write(1)}$} (c3.north);
	% \draw[->, anchor=south] (c3t.east) -- node {$p.CRASH$}(c4t.west);
	
	\draw[->, anchor=north] (c0.east) -- node {$\alpha$} (ca.west);
	
	\draw[->] (ca) -- node[above, scale=0.8] {$\Op_p$} node[below] {$\beta$} (cb);
	\draw[->] (cb) -- node[sloped, above, scale=0.7] {$p.RES$}  (cbt);
	
	\draw[->] (cb) -- node[above, scale=0.8] {$\Op_r\circ ...$} node[below] {$\gamma$} (cg);
	\draw[->] (cbt) -- node[above, scale=0.8] {$\Op_r\circ ...$} node[below] {$\gamma$} (cgt);
	\draw[->] (cgt) -- node[below] {$\delta$} (cgtd);
	\draw[->] (cg) -- node[sloped, above, scale=0.7] {$p.RES\circ \Op_p.INV \circ CRASH$}  (e1);
	
	\draw (cg)  edge[->, anchor=south, scale=0.8] node {\small $CRASH$} (d1);
	\draw (d1)  edge[->, anchor=south, scale=0.8] node {\small $p.REC \circ \zeta$} (d2);
	\draw (e1)  edge[->, anchor=south, scale=0.8] node {\small $p.REC \circ \zeta$} (e2);
	
	\draw[->, anchor=south, sloped, dashed, scale=0.7] (cgt) -- node {$\Op_q:u$}(cgtr.west);
	\draw[->, anchor=south, sloped, dashed, scale=0.7] (cgtd) -- node {$\Op_q:v$}(cgtdr.west);
	\draw[->, anchor=south, sloped, dashed, scale=0.7] (e1) -- node {$\Op_q:u$}(e1r.west);
	\draw[->, anchor=south, sloped, dashed, scale=0.7] (d1) -- node {$\Op_q:u$}(d1r.west);
	\draw[->, anchor=south, sloped, dashed, scale=0.7] (e2) -- node {$\Op_q:v$}(e2r.west);
	\draw[->, anchor=south, sloped, dashed, scale=0.7] (d2) -- node {$\Op_q:v$}(d2r.west);
	\end{scope}
	\end{tikzpicture}
	\caption{The execution of $A$ constructed in the proof of Theorem \ref{theorem:auxiliary-data-required}.}
	\label{fig:impos}
\end{figure*}

%Jayanti el al [cite] defined the notation of \emph{perturable} object. Roughly speaking, a perturable object has an operation $Op$ by a process $p$, such that any $p$-free execution can be extended in a way that effects the response of $Op$ if it is to run by itself before and after the extension. A lower bound or both time and space was proved for obstruction-free implementation of such objects. A similar notation is used by us to prove an impossibility result for NRL implementation of objects under a restricted model.

%Notice that perturable objects are not necessarily locally-perturable, and vice versa. Perturable objects requires all processes to effect the same operation, and this operation is never invoked. On the other hand, locally-perturable objects requires a single operation to be perturable twice, and the effected operation is participating in the history. Nonetheless, most common peturable objects, such as read/write register, max-register, counter, compare-and-swap and fetch-and-add are also locally perturable.

\begin{theorem}
	\label{theorem:auxiliary-data-required}
Let $A$ be a weak obstruction-free recoverable implementation of a doubly-perturbing object $O$ satisfying durable linearizability and detectability and let $Op$ be a recoverable operation of $Op$. Then either auxiliary state is provided to $Op$ via NVM, or it is provided to $Op$ via operation arguments.
\end{theorem}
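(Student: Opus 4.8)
The plan is to argue by contradiction: assume that $Op$ receives no auxiliary state, i.e., there exist two successive invocations of $Op$ between which no non-volatile variable accessible by $Op$ is written, and the arguments of $Op$ carry no data beyond what the abstract operation specifies. Then I will build a single execution $\alpha$ that reaches a configuration $C_\alpha$ after which $Op_p$ (the perturbing operation witnessing that $O$ is doubly-perturbing, performed by process $p$) is about to be invoked. I run $Op_p$ solo from $C_\alpha$ to completion, reaching $C_\beta$; by weak obstruction-freedom this is possible. Because $Op_p$ is perturbing with respect to $\Op'$ after $H_1$, the object's abstract state has genuinely changed, so a later $\Op'$ by another process $q$ would see a different response than if $Op_p$ had not run. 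The double-perturbing property gives a $p$-free extension $\gamma$ (corresponding to the history $\Op_r \circ \cdots$ that takes $H_1\circ\Op_p\circ\Op'$ to $H_2$), after which a \emph{second} invocation of $Op_p$ is again perturbing, i.e., again genuinely changes the state.

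The core of the argument is then an indistinguishability/crash construction mirroring Figure~\ref{fig:impos}. From $C_\beta$ I take two forks: (i) $p$ persists its response and returns, then $\gamma$ runs ($p$-free), reaching $C_\gamma$; along the upper branch I additionally have $p$ invoke a \emph{second} copy of $Op_p$ and then crash, reaching $E_1$; (ii) from $C_\beta$, instead $\gamma$ runs directly (again $p$-free) while $p$'s local state is still mid-$Op_p$, and then the system crashes, reaching $D_1$. The key point is that because no auxiliary state is provided to $Op$, the invocation of the second $Op_p$ on the upper branch writes nothing to NVM and passes no distinguishing arguments; and because $\gamma$ is $p$-free, it does not touch $p$'s local variables. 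Hence I can arrange that $D_1$ and $E_1$ are indistinguishable to $p$: in both, $p$ is (as far as its own recoverable state is concerned) in the middle of an $Op_p$ invocation with identical arguments and identical NVM content visible to it, and the shared memory is identical. After the crash, $p$ resets its volatile state and runs $Op_p.\texttt{Recover}$; since $D_1\stackrel{p}{\sim}E_1$ and recovery is invoked with the same arguments, $p$ behaves identically, and in particular returns the same verdict (\fail\ or a concrete response) in the continuations $D_2$ and $E_2$ obtained by running $p$'s recovery and then a solo $\Op_q$ (playing the role of $\Op'$) by some process.

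The contradiction comes from detectability plus linearizability. On the $D$-branch, the crash happened while the first $Op_p$ was still in progress and no effect of it was persisted in a way that commits it, so durable linearizability allows (indeed, by the perturbing property one can force) that $Op_p$ is \emph{not} linearized — so $Op_p.\texttt{Recover}$ must return \fail\ (and a subsequent $\Op_q$ sees the pre-$Op_p$ state). On the $E$-branch, the first $Op_p$ completed and persisted its response \emph{before} the second invocation and the crash, so it \emph{was} linearized, and detectability forces $Op_p.\texttt{Recover}$ to return $Op_p$'s genuine response (and a subsequent $\Op_q$ sees the post-$Op_p$ state). Since $\Op_q$ is perturbing after the relevant histories ($\Op_p$ being perturbing w.r.t.\ $\Op_q$ both after $H_1$ and — for the second copy — after $H_2$, which is exactly what double-perturbing buys us so that we are not blocked by the object having ``saturated''), these two outcomes are observably different, contradicting $D_1\stackrel{p}{\sim}E_1$.

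The main obstacle I expect is making the indistinguishability $D_1 \stackrel{p}{\sim} E_1$ airtight: I must ensure that invoking the second copy of $Op_p$ on the upper branch genuinely leaves no trace that $p$'s recovery could use — this is precisely where the ``no auxiliary state'' hypothesis is consumed (no NVM write between the two successive $Op_p$ invocations, no extra arguments), and it is why the system-provided $Ann_p.CP$ reset in our own algorithms does \emph{not} violate the theorem. A secondary subtlety is choosing $\gamma$ and the roles of $\Op_r, \Op_q$ from the double-perturbing definition so that on \emph{both} branches a later $\Op_q$ is still perturbing — handling the first-copy case uses condition~1 of Definition~\ref{def:doubly-perturbing} and the second-copy case uses condition~2 — and verifying that $\gamma$ being $p$-free is compatible with weak obstruction-freedom (it is, since we only ever run processes other than $p$, or $p$ alone, from configurations where the others are idle).
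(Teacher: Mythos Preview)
Your indistinguishability skeleton is essentially the paper's, but your contradiction paragraph is backwards and does not go through as stated.

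In the $D$-branch you assert that the first $\Op_p$ ``is not linearized'' because ``no effect of it was persisted in a way that commits it.'' This is false in your own construction: before the crash at $D_1$, the $p$-free extension $\gamma$ has already run, and $\gamma$ begins with the operation $\Op'$ (your $\Op_r$) that $\Op_p$ perturbs after $H_1$. Since $\Op_p$ has executed all of its shared-memory steps before $\gamma$ starts, $\Op'$ returns its post-$\Op_p$ response, which by the perturbing property differs from its pre-$\Op_p$ response. Durable linearizability therefore \emph{forces} $\Op_p$ to be linearized before $\Op'$, and hence recovery in the $D$-branch must \emph{not} return \fail\ --- the opposite of what you claim. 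Symmetrically, in the $E$-branch the crash interrupts the \emph{second} instance of $\Op_p$, which has taken no steps; recovery pertains to that instance, not to the first, and nothing forbids recovery from returning \fail\ there (the model permits recovery either to return \fail\ or to linearize the pending operation and return its response), so you cannot conclude it returns a non-\fail\ value from this branch alone.

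The paper obtains the contradiction in the opposite direction. From the $D$-branch one concludes that the common recovery run $\zeta$ does not return \fail; by $D_1 \stackrel{P}{\sim} E_1$ the same $\zeta$ after $E_1$ also does not return \fail, so the second $\Op_p$ acquires a linearization point inside $\zeta$. Condition~2 of Definition~\ref{def:doubly-perturbing} then yields an $\Op_q$ whose response differs between $E_1$ and $E_2$; transporting via $D_i \stackrel{P}{\sim} E_i$, the run $\zeta$ also changes $\Op_q$'s response between $D_1$ and $D_2$. But in the $D$-branch there is only \emph{one} invocation of $\Op_p$, already linearized before $\Op'$ inside $\gamma$, so $\zeta$ is supplying a \emph{second} linearization point for that single invocation --- contradicting linearizability. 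Your setup is salvageable, but the final paragraph must be rewritten with the branch roles exchanged and this double-linearization argument in place of the ``\fail\ versus response'' dichotomy.
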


\begin{proof}
Assume towards a contradiction that auxiliary state is not provided to $Op$ neither via NVM nor via the arguments to $Op$, so the arguments to operations in $A$ are identical to those applied to the implemented object. For example, if $O$ is a read/write object, our assumption means that read operations receive no arguments and a write operation invoked for writing value $v$ receives a single argument whose value is $v$. We construct an execution of $A$ that will establish that $A$ violates durable linearizability, thus reaching a contradiction.

\remove{%%% Currently this is more confusing than helpful
\oh{
We first provide an intuition for the construction. Let $\Op_p$ be an operation by some process $p$ witnessing that $O$ is doubly-perturbing. That is, there exists an history in which $\Op_p$ is executed twice by process $p$, and each time it perturbs some other operation. A key point in the construction is that an invocation as well as a response step of $p$ does not effect any other process, as those steps are local and does not write to shared memory. Moreover, under our assumption, if a crash occurs immediately before $p$ returns from the first occurrence of $\Op_p$, or if we let $p$ complete $\Op_p$, invoke its next $\Op_p$, and only then a crash occurs, upon recovery $p$ can not distinguish between the two resulting configurations. This is due to the fact that no non-volatile variable is written between the two steps of $p$, and the arguments passed to both operations are identical. Hence, upon recovery all shared-memory variables, as well as $p$'s local variables that have been initialized by the crash, are the same. Therefore, the recovery function of $\Op_p$ in both cases is identical. However, the first $\Op_p$ was already linearized, and thus the recovery function in the first case must return its response, while the second $\Op_p$ was yet to take any steps, and the recovery function in the second case should return fail. A formal proof follows.
}} %%% END REMOVE
	
Since $O$ is doubly-perturbing, there exists an operation $\Op_p$ by some process $p$, witnessing that $O$ is doubly-perturbing. Thus, from the first condition of Definition \ref{def:doubly-perturbing},
there is a sequential history $H_1$ such that $\Op_p$ is perturbing w.r.t. to some operation $\Op_r$ after $H_1$, for $r \neq p$. Since $A$ satisfies weak obstruction-freedom, there exists an execution $\alpha$ in which processes perform their operations according to $H_1$ in a sequential manner, starting from an initial configuration, leading to a configuration $C_\alpha$. Starting from $C_\alpha$, let $p$ apply its $\Op_p$ operation and halt just before returning; denote the resulting configuration by $C_\beta$. Starting from $C_\beta$, let $p$ complete its $\Op_p$ operation on $O$ by returning from it and denote the resulting configuration by $C'_\beta$. As $A$ satisfies weak obstruction-freedom, the executions leading to $C_\beta$ and $C'_\beta$ exist, since in both $p$ runs solo starting from a quiescent configuration. Since returning a response does not change any shared variable and no auxiliary state is provided via NVM, we get: $\text{(1) }\forall q \neq p: C_\beta \stackrel{q}{\sim} C'_\beta.$
%	\begin{equation}
%	\label{eq:Cbeta}
%	\forall q \neq p: C_\beta \stackrel{q}{\sim} C'_\beta.
%	\end{equation}	
Notice that $C'_\beta$ is a quiescent configuration.
Since $\Op_p$ witnesses that $O$ is  doubly-perturbing, from the second condition of Definition \ref{def:doubly-perturbing}, the history $H_1 \circ \Op_p \circ \Op_r$ has a $p$-free extension resulting in a sequential history $H_2$ (extending $H_1$) such that $\Op_p$ is perturbing after $H_2$. Assume it is perturbing w.r.t. an operation $\Op_q$ (for some $q \neq p$). Since $A$ is weak obstruction-free, there exists an execution $\gamma$ starting from $C'_\beta$ in which first $r$ performs to completion $\Op_r$, followed by a $p$-free sequential execution corresponding to the extension of $H_1 \circ \Op_p \circ \Op_r$, resulting in configuration $C'_\gamma$. Moreover, let $\delta$ denote the solo execution of $\Op_p$ after $C'_\gamma$, then $\Op_q$ returns different responses when executed after $C'_\gamma$ and after $C'_\gamma \delta$.
Since $\gamma$ is a $p$-free execution, and from Equation (1), %\ref{eq:Cbeta}, $\gamma$ can be executed starting from configuration $C_\beta$. Denote the resulting configuration by $C_\gamma$. From construction, $p$ is just before returning from its first $Op_p$ operation in $C_\gamma$. Moreover, from Equation \ref{eq:Cbeta} we have:
we get that $\gamma$ is also an execution starting from $C_\beta$ such that: $\text{(2) }\forall q \neq p: C_\gamma \stackrel{q}{\sim} C'_\gamma.$	
%	\begin{equation}
%	\label{eq:Cgamma}
%	\forall q \neq p: C_\gamma \stackrel{q}{\sim} C'_\gamma.
%	\end{equation}
$\Op_r$ was performed to completion in $\gamma$. Since $\Op_r$ returns the same response in the executions leading to $C'_\gamma$ and to $C_\gamma$ and as $\Op_p$ perturbed its response, it follows that the linearization point of $\Op_p$ precedes that of $\Op_r$ in both executions.
	
Next, consider the following two configurations: $D_1 = C_\gamma \circ CRASH $ and $E_1 = C_\gamma \circ (\Op_p.RES) \circ  (\Op_p.INV) \circ CRASH $. Thus, in the execution leading to $D_1$, the system crashes just before $p$ returns from its first $\Op_p$ operation, whereas, in the execution leading to $E_1$, $p$ completes that operation (thus $p$ returns its response), invokes a second $\Op_p$ operation and then the system crashes immediately after $p$'s invocation. Note that both these executions exist, since all processes but $p$ are idle in $C_\gamma$ and $A$ satisfies weak obstruction-freedom. Since neither responses nor invocations change any shared variables and since no auxiliary state is provided, neither via NVM nor via operation arguments, and as a crash reset all local variables of $p$ we get: $\text{(3) }D_1 \stackrel{P}{\sim} E_1$.
%	\begin{equation}
%	\label{eq:D1-E1}
%	D_1 \stackrel{P}{\sim} E_1.
%	\end{equation}
Next, starting from $E_1$, we extend the execution by recovering $p$, letting it perform a solo execution of its recovery function to completion, denoted by $\zeta$. We let $E_2 = E_1 \circ (p.REC) \circ \zeta$ denote the resulting configuration. $\zeta$ exists, since all processes but $p$ are idle in $E_1$ and $A$ satisfies weak obstruction-freedom. From Equation (3), $p$ performs the same execution $\zeta$ after $E_1 \circ (p.REC)$ and after $D_1 \circ (p.REC)$, hence $D_2 = D_1 \circ (p.REC) \circ \zeta$ exists and we have: $\text{(4) }D_2 \stackrel{P}{\sim} E_2$.
%	\begin{equation}
%	\label{eq:D2-E2}
%	D_2 \stackrel{P}{\sim} E_2.
%	\end{equation}

%	\oh{[[OHAD: the following was changed according to the new model, where recovery can also return fail]]}
	
	Finally, starting from configuration $E_2$, extend the execution by letting process $q \neq p$ perform $\Op_q$ to completion. $\Op_q$ completes, since $E_2$ is quiescent and $A$ satisfies weak obstruction-freedom. Assume that $\Op_q$ returns value $v$. From Equation (4), $\Op_q$ returns $v$ also if it is performed starting from configuration $D_2$.
	Notice that $p$ cannot return \fail\ in its recovery function in $\zeta$. This is due to the fact that in the execution leading to $D_1$ $p$ executed a single $\Op_p$ operation which perturbs the response of $\Op_r$. Hence, $\Op_p$ must be linearized before $\Op_r$, and the recovery function returns its response. In particular, the recovery function of $p$ executed in $E_2$ returns a response different then \fail, and thus the second $\Op_p$ must have a linearization point in the execution leading to $E_2$.
	Moreover, since $\Op_p$ is perturbing w.r.t. $\Op_q$ after $H_2$, executing $\Op_q$ starting from $E_1$ returns a response $u \neq v$.
	That is, the execution of $\Op_p$ in $\zeta_p$ perturbs the response of $\Op_q$. From Equation (3), we get that executing $\Op_q$ starting from $D_1$ returns $u$ as well.
	It follows that $p$'s recovery function, executed in $\zeta$ after $D_1$, perturbs the response of $\Op_q$ performed after $D_2$. However, $p$ executed \textit{a single} $\Op_p$ operation in the execution leading to $D_2$, which was linearized before $\Op_r$, while a second linearization point of $\Op_p$ must exist in the execution of $\Op_p$'s recovery function ($\zeta$), since it perturbs the response of $\Op_q$. Thus, $\Op_p$ has two linearization points in the same execution. This is a contradiction to our assumption that $A$ satisfies durable linearizability.
	
\end{proof}

%We note that Theorem \ref{theorem:auxiliary-data-required} holds regardless of the primitive operations used by the implementation. Moreover, Theorem \ref{theorem:auxiliary-data-required} holds for a system-wide crash model, as can be seen if we replace any individual process crash in the construction by a system crash.

\section{Discussion}
\label{section: discussion}

%[\footnote{This relaxes an assumption made in the model of \cite{attiya2018nesting}, that the recovery code has access to the address of $Op$'s instruction that $p$ was about to execute when it crashed.}]

%Recoverability gained a lot of focus over the last years. 
Several works propose different correctness conditions for the crash-recovery model \cite{Aguilera2003StrictLA, BerryhillGT15, GuerraouiL04}. Izraelevitz et al. \cite{IzraelevitzMS16} presented \emph{durable linearizability} (DL) that assumes a system-wide crash model. Roughly speaking, it requires that after the system recovers from a crash, the state of each object reflects a consistent sub-history that contains all the operations that completed before the crash. Detectability \cite{FriedmanHMP18} requires, in addition, that a process recovering from a crash that occurred while it was performing an operation $Op$ can infer if $Op$ took effect (was linearized) and, in such case, obtain its response. This allows for a ``client'' operation that called $Op$ to continue its execution after the crash, since $Op$'s response is made available to it. If $Op$ was not linearized, the client operation can choose whether or not to re-invoke it.

Attiya et al. \cite{AttiyaBH18} formalized a strict variant of this notion as \emph{nesting-safe recoverable linearizability} (NRL). NRL requires that $Op.\texttt{Recover}$ complete the crashed operation and persist its response before returning, thus allowing the client operation access to this response. NRL is stricter than detectability \cite{FriedmanHMP18}, since the latter provides the client code with the flexibility of choosing whether or not to re-invoke a crashed operation that was not linearized (upon receiving a \fail\ response), whereas NRL will re-attempt such an operation again and again until it completes (if it ever does). 

We note that NRL implies DL and detectability, as each crashed operation must complete, persist the response, and is linearized before $Op.\texttt{Recover}$ completes. As for the other direction, given an implementation that satisfies both DL and detectability, one can easily transform it into an implementation satisfying NRL by having the recovery function $Op.\texttt{Recover}$ invoke $Op$ again instead of returning a \fail\ response. Therefore, all algorithms presented in this paper can be easily be transformed to satisfy NRL.

The \emph{private cache model} \cite{FriedmanHMP18,IzraelevitzMS16} is an abstract model, where primitive operations are applied directly to NVM and processes do not share a cache. In the more realistic \emph{shared cache model} \cite{IzraelevitzMS16}, the main memory is non-volatile, while primitive operations are applied to a volatile shared cache. In this model, explicit persistency instructions are used in order to ensure that writes get persisted to the NVM and do so in the right order. Upon a crash, the shared cache content, as well as any the values of local variables, are lost. 
%This model is compatible with current NVM architecture, such as Intel 3D-XPoint.
It is easily seen that our impossibility and lower-bound results hold also in the shared cache model. As for our algorithms, for presentation simplicity, we specified and analyzed them assuming the private cache model. A simple syntactic transformation was proposed by \cite{IzraelevitzMS16}, where persistency instructions are added to the code, thus transforming it to an implementation for the shared cache model. By applying this transformation to our algorithms, they maintain their correctness (as well as their space complexity) in the shared cache model as well.

An alternative approach for achieving recoverability is the design of persistent universal constructions. For example, \cite{CohenGZ18} designed a persistent log-based universal construction that requires only one round trip to NVRAM per operation, which is optimal. However, logging imposes significant overheads in time and space, which are even more pronounced for concurrent
data structures, where there is an extra cost of synchronizing
accesses to the log.
\cite{BerryhillGT15} presented a recoverable variant of Herlihy's wait-free universal construction.
These constructions do not provide detectability as defined by \cite{FriedmanHMP18}. However, upon recovery processes are able to determine which operations were linearized before the crash and obtain their response.
Such implementations does not require auxiliary state, as upon recovery process can not infer whether its last invoked operation was linearized. For example, assume process $p$ performs the same operation $Op$ twice, and a crash occurs during its second instance of $Op$. Upon recovery, $p$ is able to conclude that its last linearized operation was $Op$, and obtain its response. However, $p$ can not tell what instance of $Op$ was linearized. Unlike it, detectability requires $p$ to infer if the second instance of $Op$, during which the crash occurred, was linearized. 

Ben-David et al. \cite{Ben-DavidBFW19} showed that any implementation using only read, write and \CAS\ primitives can be made
detectable by partitioning the code into \emph{capsules},
each containing a single \CAS\ followed by several reads,
and replacing each \CAS\ with its recoverable version.

Some recent works study the \emph{recoverable mutual exclusion} (RME) problem, defined by Golab and Ramaraju \cite{GolabH17recoverable,GolabH18recoverable,GolabR16recoverable,JayantiJJ18,JayantiJ17}.
Much work was also done on implementing persistent transactional memory frameworks (see \cite{CorreiaFR18, KolliPSCW16, LiuZCQWZR17, VolosTS11}).
Several previous works investigated recoverable implementations for specific data-structures. Friedman et al. \cite{FriedmanHMP18} presented concurrent lock-free queue algorithms exhibiting different
tradeoffs between consistency and efficiency.
Coburn et al. \cite{CoburnCAGGJS11} presented \emph{NV-heaps}. Several works proposed persistent algorithms for index trees
\cite{YangWCWYH15,LeeLSNN17, OukidLNWL16,YangWCWYH15}.

In this work, we presented the first bounded-space detectable CAS and read/write algorithms. Detectable algorithms have the advantage of supporting composability.  On the downside, as we established, this comes with a price tag in terms of space complexity and the need to provide recoverable operations and recovery functions with auxiliary data.

\paragraph*{Open Problems}
Theorem \ref{theorem:cas-lower-bound} establishes a space lower bound of $\Omega(N)$ shared bits on any obstruction-free detectable implementation of a \CAS\ object, thus establishing that Algorithm \ref{alg:cas} is asymptotically space optimal. However, Algorithm \ref{alg:cas} uses \emph{a single} $\Omega(N)$-bits shared variable. Finding such an algorithm which uses registers of size $O(\log N)$ bits, or alternatively proving that such an algorithm is impossible, is an interesting research direction.

No (non-trivial) space lower bound for a detectable read/write object is known and finding the tight bound is another open question. Finally, exploring the tradeoff between space and time complexity for detectable implementations, as well as the tradeoff between the complexities of a recoverable operation and its recovery function, is another interesting avenue for future work.

	%\clearpage
	\bibliographystyle{acm}
	%\bibliography{bib}
	\bibliography{references}

	%% Appendix
	\clearpage
	\appendix

\section{Doubly-Perturbing Objects}
\label{section: appendix-doubly-perturbing}

\begin{lemma}
	\label{lemma:counter}
	A counter object is doubly-perturbing.
\end{lemma}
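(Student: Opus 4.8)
The plan is to exhibit a single operation $\Op_p$ and a sequential history $H_1$ that together witness the doubly-perturbing property for a counter object $O$ whose operations are $inc$ (increment, returning $\Ack$ or nothing) and $read$ (returning the current count). The natural candidate is $\Op_p = inc_p$ and $H_1$ the empty history. First I would verify the first condition of Definition~\ref{def:doubly-perturbing}: for any $q \neq p$, the operation $read_q$ returns $0$ after the empty history but returns $1$ after $inc_p$, so $inc_p$ is perturbing with respect to $read_q$ after the empty history.

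Next I would handle the second condition. After $H_1 \circ \Op_p \circ \Op' = inc_p \circ read_q$ the counter's value is $1$, and I need a $p$-free extension producing a history $H_2$ after which $inc_p$ is again perturbing. The obstacle here, and the point that distinguishes a plain counter from a max register, is that $inc$ genuinely changes the state every time it is applied — so unlike $writeMax$, re-invoking $inc_p$ after the state has moved still changes the value and hence can still be observed. Concretely, I would take $H_2 = inc_p \circ read_q$ itself (so the $p$-free extension is empty), and then check that a second $inc_p$ followed by a second $read_q$ perturbs that second $read_q$: after $H_2$ the value is $1$, so $read_q$ returns $1$ in $H_2 \circ read_q$ but returns $2$ in $H_2 \circ inc_p \circ read_q$. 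Thus $inc_p$ is perturbing with respect to $read_q$ after $H_2$.

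Having checked both conditions with the same witnessing operation $inc_p$ and the histories $H_1 = \varepsilon$, $H_2 = inc_p \circ read_q$, I would conclude that $inc_p$ witnesses that the counter is doubly-perturbing, completing the proof. I expect no real obstacle beyond being careful that the definition allows the $p$-free extension in condition~2 to be trivial (empty), which it does, and that the roles of $\Op'$ in condition~1 and $\Op_q$ in condition~2 may coincide (both being $read_q$), which the definition also permits. The argument is essentially identical in structure to the read/write lemma already proved in the excerpt, with $inc$ playing the role of $write(v_1)$ and the counter's monotone but strictly-changing state replacing the overwrite behavior of a register.
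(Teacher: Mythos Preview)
Your proposal is correct and follows essentially the same approach as the paper: both take $\Op_p = inc_p$, $H_1$ the empty history, $\Op' = read_q$, and the empty $p$-free extension giving $H_2 = inc_p \circ read_q$, then observe that $read_q$ returns $0$ vs.\ $1$ after $H_1$ and $1$ vs.\ $2$ after $H_2$. The paper also notes (as you implicitly assume) that the counter's domain must contain at least three distinct values $v_0, v_0+1, v_0+2$.
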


\begin{proof}
	Consider a counter object $O$ over a domain of values including at least three distinct values $v_0$, $v_0+1$, $v_0+2$, initialized to $v_0$. We claim that $Increment_p$ witnesses that $O$ is doubly-perturbing. For any process $q \neq p$, $Increment_p$ is perturbing w.r.t. $read_q$ after the empty sequential history. This satisfies the first condition of Definition \ref{def:doubly-perturbing}. Moreover, $Increment_p \circ read_q$ can be extended by a an empty $p-free$ extension, resulting in a sequential history $H_2=Increment_p \circ read_q$, such that (a second instance of) $Increment_p$ is perturbing w.r.t. (a second instance of) $read_q$ operation after $H_2$. This satisfies the second condition of Definition \ref{def:doubly-perturbing}. Thus, $O$ is doubly-perturbing.
\end{proof}

Lemma \ref{lemma:counter} proves that a bounded counter, supporting only $\{0,1,2\}$ values, is doubly-perturbing. However, such a bounded counter is not perturbable, since clearly an operation $Op$ can change its response at most twice.

\begin{lemma}
	A compare-and-swap object is doubly-perturbing.
\end{lemma}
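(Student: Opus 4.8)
The plan is to mirror the structure of the preceding lemmas (the read/write case in the body and the counter case in the appendix): exhibit one concrete witnessing operation together with a short sequential history, and verify the two conditions of Definition~\ref{def:doubly-perturbing} directly. Concretely, I would take $O$ to be a CAS object over a domain of values containing at least two distinct elements $v_0 \neq v_1$, with $v_0$ as the initial value, and I would claim that $CAS_p(v_0,v_1)$ witnesses that $O$ is doubly-perturbing, using $H_1 = \varepsilon$ (the empty history) and, for a process $q \neq p$, the operation $read_q$ as the perturbed operation $\Op'$.

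For the first condition, I would observe that $read_q$ returns $v_0$ after $H_1 \circ read_q$, whereas after $H_1 \circ CAS_p(v_0,v_1) \circ read_q$ the CAS succeeds -- the current value is exactly $v_0$ -- so it installs $v_1$ and $read_q$ now returns $v_1 \neq v_0$; hence $CAS_p(v_0,v_1)$ is perturbing with respect to $read_q$ after $H_1$. (One could equally use a failing $CAS_q(v_0,v_1)$ here in place of $read_q$; $read_q$ is chosen to parallel the existing proofs.)

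For the second condition, the key point is that immediately after the successful $CAS_p(v_0,v_1)$ the object holds $v_1$, so a single $p$-free ``reset'' suffices: I would extend $H_1 \circ CAS_p(v_0,v_1) \circ read_q$ by $CAS_q(v_1,v_0)$, which succeeds and returns the object to value $v_0$, obtaining $H_2 = CAS_p(v_0,v_1) \circ read_q \circ CAS_q(v_1,v_0)$. Since the object's value after $H_2$ is again $v_0$, repeating the first-condition argument verbatim with $H_2$ in place of $\varepsilon$ shows that a (second) instance of $CAS_p(v_0,v_1)$ is perturbing with respect to a (second) instance of $read_q$ after $H_2$. This establishes both conditions, and hence that $O$ is doubly-perturbing.

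There is no real obstacle in this argument; the only thing that needs care -- and the single point I would double-check -- is the bookkeeping of the object's value along the constructed history: that the reset operation $CAS_q(v_1,v_0)$ is genuinely $p$-free and genuinely succeeds, so that the state reached after $H_2$ coincides with the initial state and the first-condition reasoning can be reused without change.
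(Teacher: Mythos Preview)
Your proposal is correct and follows essentially the same approach as the paper: same witnessing operation $CAS_p(v_0,v_1)$, same empty $H_1$, and the same $p$-free reset $CAS_q(v_1,v_0)$ to obtain $H_2$. The only difference is cosmetic: the paper takes $\Op' = CAS_q(v_0,v_1)$ (which flips from returning \emph{true} to \emph{false}), whereas you use $read_q$; you already note this alternative yourself, and either choice works.
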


\begin{proof}
	Consider a compare-and-swap object $O$ over a domain of values including at least two distinct values $v_0$, $v_1$, initialized to $v_0$. We claim that $CAS_p(v_0,v_1)$ witnesses that $O$ is doubly-perturbing. For any process $q \neq p$, $CAS_p(v_0,v_1)$ is perturbing w.r.t. $CAS_q(v_0,v_1)$ after the empty sequential history. This satisfies the first condition of Definition \ref{def:doubly-perturbing}. Moreover, $CAS_p(v_0,v_1) \circ CAS_q(v_0,v_1)$ can be extended by a $CAS_q(v_1,v_0)$ operation, resulting in a sequential history $H_2=CAS_p(v_0,v_1) \circ CAS_q(v_0,v_1) \circ CAS_q(v_1,v_0)$, such that (a second instance of) $CAS_p(v_0,v_1)$ is perturbing w.r.t. (a second instance of) $CAS_q(v_0,v_1)$ operation after $H_2$. This satisfies the second condition of Definition \ref{def:doubly-perturbing}. Thus, $O$ is doubly-perturbing.
\end{proof}

\begin{lemma}
	A fetch-and-add object is doubly-perturbing.
\end{lemma}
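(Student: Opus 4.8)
The plan is to exhibit a specific process, operation, and sequential history witnessing that a fetch-and-add object is doubly-perturbing, exactly in the pattern used for the counter and compare-and-swap lemmas. First I would fix a fetch-and-add object $O$ over a domain containing at least three distinct consecutive values, say initialized to $0$ with $1,2$ also in the domain, where $FetchAdd_p(1)$ increments the object by $1$ and returns its previous value. The claim is that $FetchAdd_p(1)$ witnesses that $O$ is doubly-perturbing.

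For the first condition of Definition \ref{def:doubly-perturbing}, I would take $H_1$ to be the empty history and show that $FetchAdd_p(1)$ is perturbing with respect to $read_q$ (for any $q \neq p$) after $H_1$: indeed $read_q$ returns $0$ in $H_1 \circ read_q$ but returns $1$ in $H_1 \circ FetchAdd_p(1) \circ read_q$. (If one prefers to avoid relying on a $read$ operation being supported, one can instead use $FetchAdd_q(1)$ as the perturbed operation, which returns $0$ versus $1$; I would mention whichever is cleanest given the object's signature, mirroring the CAS lemma's use of $CAS_q(v_0,v_1)$.) For the second condition, $H_1 \circ FetchAdd_p(1) \circ read_q$ must have a $p$-free extension $H_2$ after which a second instance of $FetchAdd_p(1)$ is again perturbing. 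Here the subtlety is that the counter/FAA value has already advanced, so I would append a $p$-free operation that ``resets'' the object back to a value where a further $FetchAdd_p(1)$ still changes some other operation's response — e.g., if the object supports $FetchAdd_q(-1)$ (or more generally $FetchAdd_q(d)$ for a $d$ that returns the value to a useful state), set $H_2 = FetchAdd_p(1) \circ read_q \circ FetchAdd_q(-1)$, and then $read_q$ returns $0$ after $H_2$ but $1$ after $H_2 \circ FetchAdd_p(1)$. If the domain is such that negative increments are unavailable, I would instead note that $FetchAdd$ over an unbounded domain needs no reset at all: $read_q$ returns $1$ after $H_2 = FetchAdd_p(1) \circ read_q$ and returns $2$ after $H_2 \circ FetchAdd_p(1)$, so an empty extension suffices, just as in Lemma \ref{lemma:counter}.

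The main obstacle — really just a modeling choice rather than a genuine difficulty — is picking the extension in condition 2 so that it is genuinely $p$-free and leaves the object in a state where one more $FetchAdd_p(1)$ still perturbs some operation; this depends on whether we assume a bounded or unbounded fetch-and-add and whether read is in the interface. Given the paper's evident convention (it already proved the analogous statement for an unbounded counter with an empty extension), the cleanest route is to treat fetch-and-add over a sufficiently large domain and use the empty $p$-free extension, so that the proof is essentially identical to Lemma \ref{lemma:counter} with $Increment$ replaced by $FetchAdd(1)$. I would therefore write the proof in that form, and only remark parenthetically on the bounded case.
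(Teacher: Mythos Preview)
Your proposal is correct and lands on exactly the approach the paper uses: take $H_1$ empty, show $FAA_p(1)$ perturbs $read_q$, and then use the \emph{empty} $p$-free extension so that $H_2 = FAA_p(1)\circ read_q$, after which a second $FAA_p(1)$ again perturbs $read_q$. The paper even notes explicitly that the proof is the same as the counter lemma with $Increment$ replaced by $FAA(1)$, so your final ``cleanest route'' is precisely what they do.
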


\begin{proof}
	Since fetch-and-add object support an addition of the value $1$, the proof from Lemma
	\ref{lemma:counter} will hold here too. More formally, consider a fetch-and-add object $O$ over a domain of values including at least three distinct values $v_0$, $v_0+1$, $v_0+2$, initialized to $v_0$. We claim that $FAA_p(1)$ witnesses that $O$ is doubly-perturbing. For any process $q \neq p$, $FAA_p(1)$ is perturbing w.r.t. $read_q$ after the empty sequential history. This satisfies the first condition of Definition \ref{def:doubly-perturbing}. Moreover, $FAA_p(1) \circ read_q$ can be extended with the empty $p-free$ extension, resulting in a sequential history $H_2=FAA_p(1) \circ read_q$, such that (a second instance of) $FAA_p(1)$ is perturbing w.r.t. (a second instance of) $read_q$ operation after $H_2$. This satisfies the second condition of Definition \ref{def:doubly-perturbing}. Thus, $O$ is doubly-perturbing.
\end{proof}
\begin{lemma}
	A FIFO queue object is doubly-perturbing.
\end{lemma}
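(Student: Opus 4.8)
The plan is to mimic the structure of the preceding lemmas for the counter and fetch-and-add objects, where the $p$-free extension required by the second condition of Definition~\ref{def:doubly-perturbing} is taken to be empty. I would choose the witnessing operation to be an $enqueue$ and use a $dequeue$ as the operation whose response is perturbed, since a $dequeue$ is the only FIFO-queue operation whose response reveals information about the queue's state.

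Concretely, let $O$ be a FIFO queue over a domain containing at least one value $v$, with the queue initially empty, and fix $q \neq p$. I claim that $enqueue_p(v)$ witnesses that $O$ is doubly-perturbing. For the first condition I take $H_1$ to be the empty sequential history: $dequeue_q$ returns the empty-queue indicator (distinct from $v$) in $H_1 \circ dequeue_q$, but returns $v$ in $H_1 \circ enqueue_p(v) \circ dequeue_q$, so $enqueue_p(v)$ is perturbing w.r.t.\ $dequeue_q$ after $H_1$.

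For the second condition, the key observation is that the element inserted by $enqueue_p(v)$ is removed by the subsequent $dequeue_q$, so the queue is empty again after $H_1 \circ enqueue_p(v) \circ dequeue_q$; hence the empty $p$-free extension suffices and I set $H_2 = enqueue_p(v) \circ dequeue_q$. Then a second $dequeue_q$ returns the empty-queue indicator in $H_2 \circ dequeue_q$ but returns $v$ in $H_2 \circ enqueue_p(v) \circ dequeue_q$, so (a second instance of) $enqueue_p(v)$ is again perturbing w.r.t.\ (a second instance of) $dequeue_q$ after $H_2$. Both conditions of Definition~\ref{def:doubly-perturbing} then hold, so $O$ is doubly-perturbing.

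I do not expect any real obstacle. The only points deserving a line of care are: (i) the sequential specification of $dequeue$ on an empty queue returns a value distinguishable from $v$, which holds for the standard FIFO-queue specification; (ii) the ``observer'' in each condition must be an operation of a process other than $p$, which holds because $q \neq p$ throughout; and (iii) the argument is insensitive to whether the queue is bounded, since at most one element is ever present.
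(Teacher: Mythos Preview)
Your proof is correct. It satisfies both conditions of Definition~\ref{def:doubly-perturbing}: with $H_1$ empty, $dequeue_q$ returns the empty-queue indicator in $H_1 \circ dequeue_q$ and $v$ in $H_1 \circ enqueue_p(v) \circ dequeue_q$; and since the queue is empty again after $enqueue_p(v) \circ dequeue_q$, the empty $p$-free extension yields an $H_2$ after which the identical argument applies. Your caveat~(i) about the empty-queue indicator being distinguishable from $v$ is the only assumption needed, and it is standard.

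The paper takes a different route: it chooses $Deq_p$ as the witnessing operation rather than an enqueue. To make $Deq_p$ perturb $Deq_q$, the paper first builds $H_1 = Enq_p(v_0) \circ Enq_p(v_1)$ so that $Deq_q$ returns $v_0$ without the intervening $Deq_p$ and $v_1$ with it; the $p$-free extension then re-populates the queue via $Enq_q(v_0) \circ Enq_q(v_1)$ to reproduce the same situation for $H_2$. Your argument is more economical---one value, empty $H_1$, empty extension---at the price of relying on the empty-dequeue return value; the paper's argument avoids that reliance but needs two distinct data values and non-trivial setup and extension histories. Either choice plugs into Theorem~\ref{theorem:auxiliary-data-required} without modification.
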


\begin{proof}
	Consider a FIFO queue object $O$ over a domain of values including at least two distinct values $v_0$, $v_1$, initialized to an empty queue. We claim that $Deq_p$ witnesses that $O$ is doubly-perturbing. After the empty sequential history, we perform the following sequence of operations $Enq_p(v_0) \circ Enq_p(v_1)$, and denote the resulting history as $H_1$.
	For any process $q \neq p$, $Deq_p$ is perturbing w.r.t. $Deq_q$ after $H_1$. This satisfies the first condition of Definition \ref{def:doubly-perturbing}. Moreover, $Deq_p \circ Deq_q$ can be extended by the sequence of operations:  $Enq_q(v_0) \circ Enq_q(v_1)$ as before, resulting in a sequential history $H_2=Enq_p(v_0) \circ Enq_p(v_1) \circ Deq_p \circ Deq_q \circ Enq_q(v_0) \circ Enq_q(v_1)$, such that (a second instance of) $Deq_p$ is perturbing w.r.t. (a second instance of) $Deq_q$ operation after $H_2$. This satisfies the second condition of Definition \ref{def:doubly-perturbing}. Thus, $O$ is doubly-perturbing.
\end{proof}

\remove{
\section{System Support}
In the following we

\begin{figure*}[b]
	\removelatexerror
	
	\begin{algorithm}[H]
		
		\footnotesize
		
		\begin{flushleft}
			\textbf{Non-Volatile Shared variables}: read/write announcements array $annc[N]$, holding tuples of $\langle Op_name; params; result \rangle$, initially all tuples are set to $\bot $ \\
			\textbf{Non-Volatile Private variables}:
		\end{flushleft}
		
		\begin{multicols*}{2}

			\begin{procedure}[H]
				\caption{() \small \mbox{\sc Announce}\ (Op, params)}
				$annc[p] := \langle Op; params; \bot \rangle$ \;
				$annc[p].result := $ \small \mbox{\sc Op}(params) \;
			\end{procedure}
			
			\columnbreak
			
			\begin{procedure}[H]
				\caption{() \small \mbox{\sc Announce.Recover}\ ()}
				
				$\langle op; params; res \rangle := annc[p]$ \;
				\uIf {$res \neq \bot$} {
					\KwRet $res$
				}
				\Else{
					$annc[p].result := $ \small \mbox{\sc Op.Recover}(params) \;
					\KwRet $annc[p].result$
				}
			\end{procedure}

		\end{multicols*}
		\caption{Announce code for a compiler, and Announce Recovery code for process $p$.}
		\label{alg:system-recovery}
	\end{algorithm}
	\caption{On each call of some object operation $Op$ with the operation parameters $params$, the compiler/library should translate the call to \mbox{\sc Announce}$(Op, params)$. In case that process $p$ crashes, the system makes $p$ invoke the pre-compiled function \mbox{\sc Announce.Recover}$()$ that resides in the Non-Volatile memory, immediately upon recovery. }
\end{figure*}
}

\end{document}